\documentclass[12pt,oneside]{amsart}

\usepackage{amsthm,amsmath,amssymb,amsfonts,amscd,mathtools,upgreek,url,
}
\usepackage{amsaddr}

\newcommand{\sho}{{\mathbf{S}}}
\newcommand{\tdt}{\mathbf{D}_t}

\newcommand{\pe}{u}
\newcommand{\vv}{w}
\newcommand{\tvv}{\tilde{\vv}}

\newcommand{\ik}{\ell}
\newcommand{\iup}{\xi}
\newcommand{\nt}{n}
\newcommand{\km}{k}
\newcommand{\id}{\mathbf{Id}}

\newcommand{\mX}{\mathbb{X}}
\newcommand{\mY}{\mathbb{Y}}
\newcommand{\mH}{\mathbb{H}}
\newcommand{\bC}{\mathbb{C}}
\newcommand{\kmx}{\upalpha}

\newcommand{\ial}{\upalpha}
\newcommand{\ibe}{\upbeta}
\newcommand{\imm}{q}
\newcommand{\glem}{\mathrm{G}}

\newcommand{\mm}{\mathbf{M}}

\newcommand{\gbf}{\mathbf{G}}

\newcommand{\gbb}{\mathbb{G}}

\newcommand{\dd}[1]{\partial_{#1}}
\newcommand{\ofc}[2]{of~class~$(#1,#2)$}

\newcommand{\ws}[1]{\mathbb{W}_{#1}}
\newcommand{\hs}[1]{\mathbb{B}_{#1}}

\newcommand{\euop}[1]{\boldsymbol\Delta^{#1}}

\newcommand{\mme}{\mathbf{m}}

\newcommand{\ap}{\mathrm{q}}

\newcommand{\mc}{\mathrm{c}}

\newcommand{\ud}{u_\ik,\ldots}
\newcommand{\ua}{{\mathrm{a}}}
\newcommand{\ub}{{\mathrm{b}}}
\newcommand{\tua}{{\tilde{\ua}}}
\newcommand{\tub}{{\tilde{\ub}}}

\newcommand{\fn}{f}
\newcommand{\mv}{\Tilde{u}}

\newcommand{\bv}{\Tilde{\Tilde{u}}}
\newcommand{\td}{w}

\newcommand{\mf}{\Phi}
\newcommand{\la}{\uplambda}
\newcommand{\nab}{\nabla}

\newcommand{\ff}{\mathbb{F}}
\newcommand{\tff}{\widetilde{\ff}}

\newcommand{\fik}{\mathbb{C}}

\newcommand{\sm}{\mathrm{d}}

\newcommand{\lb}{\label}
\newcommand{\er}{\eqref}

\newcommand{\zp}{\mathbb{Z}_{\ge 0}}
\newcommand{\zsp}{\mathbb{Z}_{>0}}
\newcommand{\zz}{\mathbb{Z}}

\newcommand{\cl}{\colon}
\newcommand{\pd}{\partial}
\newcommand{\lt}{\mathbf{U}}

\newcommand{\ndu}{N}

\newcommand{\ip}{p}

\newcommand{\tiltdt}{\widetilde{\mathbf{D}}_t}

\newcommand{\mer}{\mathcal{M\!E\!R}(\mathbb{C},\la)}

\newtheorem{theorem}{Theorem}

\newtheorem{lemma}{Lemma}

\theoremstyle{definition}
\newtheorem{definition}{Definition}

\newtheorem{remark}{Remark}

\begin{document}

\title[On matrix Lax representations for differential-difference equations]{On matrix Lax representations for (1+1)-dimensional \\ evolutionary differential-difference equations}
\date{}

\author{Sergei Igonin} 
\address{Center of Integrable Systems, P.G. Demidov Yaroslavl State University, Yaroslavl, Russia}
\email{s-igonin@yandex.ru}

\subjclass[2020]{37K60, 37K35}
\keywords{Integrable differential-difference equations; matrix Lax representations; Lax pairs; gauge transformations; Miura-type transformations}

\begin{abstract}
Differential-difference matrix Lax representations (Lax pairs), 
gauge transformations, and discrete Miura-type transformations (MTs)
belong to the main tools in the theory of (nonlinear) integrable differential-difference equations.

For a given equation, two matrix Lax representations (MLRs) 
are said to be gauge equivalent if one of them 
can be obtained from the other by applying a matrix gauge transformation.
Generalizing and extending several previous works on MLRs and MTs,
we present new results on the following problems:
\begin{itemize}
	\item When and how can one simplify a given MLR by means of gauge transformations?
	\item How can one use MLRs and gauge transformations for constructing MTs?
	\item A MLR is called fake if it is gauge equivalent to a trivial MLR.
How to determine whether a given MLR is not fake?
\end{itemize}

We consider the general (1+1)-dimensional evolutionary differential-difference case 
when a MLR can depend on any shifts of dependent variables and can be non-autonomous.
As applications and illustrations of the presented general theory, 
we construct several new two-component integrable equations (with new MLRs) 
connected by new MTs to known integrable equations 
from the papers [S.~Konstantinou-Rizos, A.V.~Mikhailov, P.~Xenitidis, \emph{J. Math. Phys.} 2015],
[E.~Mansfield, G.~Mar\'i Beffa, Jing~Ping~Wang, \emph{Found. Comput. Math.} 2013]),
including non-autonomous examples.
\end{abstract}

\maketitle

\section{Introduction}
\lb{sint}

In this paper we study relations between 
(1+1)-dimensional differential-difference matrix Lax representations (Lax pairs), 
gauge transformations, and (discrete) Miura-type transformations,
which belong to the main tools in the theory of (1+1)-dimensional 
integrable differential-difference equations.
Such equations occupy an important place in the modern theory of integrable systems 
and its applications.
In particular, such equations arise in discretizations of integrable 
partial differential equations (PDEs), 
in discretizations of various differential geometric constructions
and as chains associated with Darboux and B\"acklund transformations of PDEs 
(see, e.g.,~\cite{HJN-book2016,kmw,KMX2015,LWY-book2022,MMBW2013,suris2003} 
and references therein).

In this paper, $\zsp$ and $\zp$ denote the sets of positive and nonnegative integers, respectively.
Let $\ndu\in\zsp$. Below we consider an equation for an $\ndu$-component vector-function 
$u=\big(u^1(n,t),\ldots,u^{\ndu}(n,t)\big)$
of an integer variable~$n$ and a real or complex variable~$t$.
For any fixed integer~$\ik$, we have $u_\ik=(u^1_\ik,\ldots,u^{\ndu}_\ik)$,
where for each $\iup=1,\ldots,\ndu$ the component $u_\ik^\iup$ is a function
of $n,t$ defined as follows $u_\ik^\iup(n,t)=u^\iup(n\!+\!\ik,t)$.
That is, $u_\ik(n,t)=u(n\!+\!\ik,t)$. In particular, $u_0=u$.

Let $\ua,\ub\in\zz$, $\,\ua\le\ub$.
Consider a (1+1)-dimensional evolutionary differential-difference equation
\begin{gather}
\lb{sdde}
\pd_t(u)=\ff(\nt,u_\ua,u_{\ua+1},\ldots,u_\ub),
\end{gather}
where $\ff$ is an $\ndu$-component vector-function $\ff=(\ff^1,\ldots,\ff^{\ndu})$, 
which may depend on~$n$ and on~$u_\ik$ for $\ik=\ua,{\ua\!+\!1},\ldots,\ub$.


The differential-difference equation \eqref{sdde} 
is equivalent to the following infinite collection of differential equations
\begin{gather}
\lb{infcol}
\pd_t\big(u(n,t)\big)=\ff\big(\nt,u(n\!+\!\ua,t),u(n\!+\!\ua\!+\!1,t),\ldots,u(n\!+\!\ub,t)\big),\qquad\quad n\in\zz.
\end{gather}
Using the components of $u=\big(u^1(n,t),\ldots,u^{\ndu}(n,t)\big)$ 
and of $\ff=(\ff^1,\ldots,\ff^{\ndu})$, one can rewrite~\eqref{sdde} as
\begin{gather}
\lb{msdde}
\pd_t\big(u^i\big)=\ff^i(\nt,u_\ua,u_{\ua+1},\ldots,u_\ub),\qquad\quad
i=1,\ldots,\ndu.
\end{gather}
For each fixed $\ik\in\zz$, replacing $n$ by $n+\ik$ in~\er{infcol}, 
we see that \er{sdde} implies 
\begin{gather}
\lb{uildde}
\pd_t\big(u^i_\ik\big)=\ff^i(n\!+\!\ik,u_{\ua+\ik},u_{\ua+1+\ik},\ldots,u_{\ub+\ik}),
\qquad\quad
i=1,\ldots,\ndu,\qquad \ik\in\zz.
\end{gather}
We use the formal theory of (1+1)-dimensional evolutionary differential-difference equations, where one regards 
\begin{gather}
\lb{uldiu}
n,\qquad\qquad
u_\ik=(u^1_\ik,\ldots,u^\ndu_\ik),\quad\ik\in\zz,
\end{gather}
as independent quantities.
Note that $u_\ik=(u^1_\ik,\ldots,u^\ndu_\ik)$
are sometimes called \emph{dynamical variables}.

In this paper, the notation of the type $\fn=\fn(\nt,\ud)$ means 
that a function~$\fn$ depends on a finite number of the variables~\er{uldiu}.
The notation of the type $\fn=\fn(\nt,u_\ial,\ldots,u_\ibe)$ 
for some integers $\ial\le\ibe$ means that $\fn$ 
may depend on~$n$ and on~$u_\ik^{\xi}$ for $\ik=\ial,\ldots,\ibe$, 
$\,\xi=1,\ldots,\ndu$.


Equation~\er{sdde} said to be \emph{non-autonomous} 
if its right-hand side~$\ff$ depends nontrivially on~$n$.


We denote by~$\sho$ the \emph{shift operator} with respect to the variable~$n$.
For any function $g=g(n,t)$ one has the function~$\sho(g)$
such that $\sho(g)(n,t)=g(n\!+\!1,t)$.
Furthermore, for each $k\in\zz$, we have the $k$th power~$\sho^k$ 
of the operator~$\sho$ and the formula $\sho^k(g)(n,t)=g(n\!+\!k,t)$.

Since $u_\ik$ corresponds to $u(n+\ik,t)$, the operator~$\sho$ 
and its powers~$\sho^k$ for $k\in\zz$ act on functions of
the variables~\er{uldiu} by means of the rules
\begin{gather}
\lb{csuf}
\begin{gathered}
\sho(n)=n+1,\qquad\sho(u_\ik)=u_{\ik+1},\qquad
\sho^k(n)=n+k,\qquad\sho^k(u_\ik)=u_{\ik+k},\\
\sho^k\big(\fn(\nt,u_\ik,\ldots)\big)=\fn(n\!+\!k,\sho^k(u_{\ik}),\ldots).
\end{gathered}
\end{gather}
That is, applying $\sho^k$ to a function $\fn=\fn(\nt,u_\ik,\ldots)$, 
we replace~$n$ by~$n\!+\!k$ in~$\fn$ 
and replace~$u_\ik^{\xi}$ by~$u_{\ik+k}^{\xi}$ in~$\fn$ for all $\ik\in\zz$,
$\,\xi=1,\ldots,\ndu$.

The \emph{total derivative operator~$\tdt$ corresponding to equation~\eqref{sdde}} 
acts on functions of the variables~\er{uldiu} as follows
\begin{gather}
\lb{dtfu}
\tdt\big(\fn(\nt,\ud)\big)=\sum_{\ik,\xi}\sho^\ik(\ff^\xi)\cdot\frac{\pd \fn}{\pd u_\ik^\xi},
\end{gather}
where $\ff^\xi=\ff^\xi(\nt,u_\ua,\ldots,u_\ub)$, $\,\xi=1,\ldots,\ndu$, 
are the components of the vector-function 
$\ff=(\ff^1,\ldots,\ff^\ndu)$ from~\eqref{sdde}.
Formula~\er{dtfu} reflects the chain rule for the derivative with respect to~$t$, 
taking into account equations~\er{uildde}.
From~\er{csuf},~\er{dtfu} it follows that, for any function $h=h(\nt,\ud)$,
one has $\tdt\big(\sho(h)\big)=\sho\big(\tdt(h)\big)$.

In this paper, matrix-functions are sometimes called simply matrices.

\begin{remark}
\lb{rratf}
In this paper we deal with functions which may depend 
on a finite number of the variables~\er{uldiu} and a complex parameter~$\la$.
For simplicity of exposition, in order to present the main ideas clearly and concisely, 
we consider only functions which depend rationally on
the variables~\er{uldiu} and meromorphically on~$\la$, in the sense described below.
Many results of this paper can be extended to more general classes of functions, 
but such extension requires more technical and heavy proofs.

A scalar ($\bC$-valued) function $\fn=\fn(\nt,\ud,\la)$ 
is said to be \emph{of admissible type} if $\fn$ can be written in the form
\begin{gather}
\lb{fngh}
\fn=\frac{g(\nt,\ud,\la)}{h(\nt,\ud,\la)},
\end{gather}
where $g(\nt,\ud,\la)$ and $h(\nt,\ud,\la)$ are polynomials in the variables~\er{uldiu} 
whose coefficients are meromorphic functions of~$\la\in\bC$, 
and $h(\nt,\ud,\la)$ is not the zero polynomial.

We denote by~$\mer$ the field of meromorphic functions of~$\la\in\bC$.
Then~\er{fngh} can be regarded as a rational function of $\nt,\,u_\ik=(u^1_\ik,\ldots,u^{\ndu}_\ik)$
with coefficients in~$\mer$.
Thus, a scalar function $\fn=\fn(\nt,\ud,\la)$ is of admissible type if $\fn$
is a rational function of the variables~\er{uldiu} with coefficients in~$\mer$.

For each $\sm\in\zsp$, when we consider a $\sm\times\sm$ matrix-function
$\mm=\mm(\nt,\ud,\la)$, we assume that each entry of the matrix~$\mm$ 
is a scalar function of admissible type.
When we say that such a matrix~$\mm(\nt,\ud,\la)$ is assumed to be invertible, 
we mean that the determinant $\det\big(\mm(\nt,\ud,\la)\big)$ is not identically zero. 
Then $\big(\det\big(\mm(\nt,\ud,\la)\big)\big)^{-1}$ is well defined 
(as a rational function of $\nt,\,u_\ik=(u^1_\ik,\ldots,u^{\ndu}_\ik)$ with coefficients in~$\mer$), 
which allows us to compute the inverse matrix $\mm^{-1}=\big(\mm(\nt,\ud,\la)\big)^{-1}$, 
whose entries are of admissible type.
\end{remark}

\begin{definition}
\lb{dmlpgt}
Let $\sm\in\zsp$. 
Let $\mm=\mm(\nt,\ud,\la)$ and $\lt=\lt(\nt,\ud,\la)$ be $\sm\times\sm$ matrix-functions
depending on the variables~\er{uldiu} and a complex parameter~$\la$.
Suppose that $\mm$ is invertible (in the sense of Remark~\ref{rratf}) and we have
\begin{gather}
\lb{lr}
\tdt(\mm)=\sho(\lt)\cdot\mm-\mm\cdot\lt,
\end{gather}
where $\tdt$ is given by~\eqref{dtfu}.
Then the pair $(\mm,\lt)$ is called a \emph{matrix Lax representation} (MLR) for equation~\eqref{sdde}.

The connection between~\eqref{lr} and~\eqref{sdde} is as follows.
The components~$\ff^\iup$, $\,\iup=1,\ldots,\ndu$, of the right-hand side $\ff(\nt,u_\ua,\ldots,u_\ub)$ 
of~\eqref{sdde} appear in formula~\er{dtfu} for the operator~$\tdt$, which is used in~\er{lr}.

We say that a MLR $\big(\mm(\nt,\ud,\la),\,\lt(\nt,\ud,\la)\big)$ is \emph{non-autonomous} 
if at least one of the matrix-functions $\mm,\,\lt$ depends nontrivially on~$n$.

Correspondingly, a MLR $\big(\mm,\lt\big)$ is called \emph{autonomous} if $\mm,\,\lt$ do not depend on~$n$. 

Relation~\er{lr} implies that the following (overdetermined) auxiliary linear system
\begin{gather}
\lb{syspsi}
\sho(\Psi)=\mm\cdot\Psi,\qquad\quad
\pd_t(\Psi)=\lt\cdot\Psi
\end{gather}
is compatible modulo equation~\eqref{sdde}. 
Here $\Psi=\Psi(n,t)$ is an invertible $\sm\times\sm$ matrix-function.

We say that the matrix $\mm=\mm(\nt,\ud,\la)$ is the \emph{$\sho$-part} of the MLR $(\mm,\lt)$.

Then for any invertible $\sm\times\sm$ matrix-function  
$\gbf=\gbf(\nt,\ud,\la)$ the matrices
\begin{gather}
\lb{ggtmtu}
\widehat \mm=\sho(\gbf)\cdot \mm\cdot\gbf^{-1},\qquad\quad
\widehat\lt=\tdt(\gbf)\cdot\gbf^{-1}+
\gbf\cdot\lt\cdot\gbf^{-1}
\end{gather}
form a MLR for equation~\eqref{sdde} as well.
Here the matrix $\gbf=\gbf(\nt,\ud,\la)$
is invertible in the sense of Remark~\ref{rratf} and is called a \emph{gauge transformation}.
The MLR $\big(\widehat \mm,\,\widehat\lt\big)$ is \emph{gauge equivalent} 
to the MLR $(\mm,\lt)$, and one can say that
$\big(\widehat \mm,\,\widehat\lt\big)$ is obtained from $(\mm,\lt)$ 
by means of the gauge transformation~$\gbf$.
\end{definition}

\begin{definition}
\lb{dtrmlp}
We consider MLRs for a given equation~\er{sdde}.
A MLR~$(\mm,\lt)$ is said to be \emph{trivial} 
if $\mm$ does not depend on~$u_\ik$ for any $\ik\in\zz$.
Then \er{dtfu} yields $\tdt(\mm)=0$, and from~\er{lr} one derives 
$\sho(\lt)=\mm\cdot\lt\cdot\mm^{-1}$, which 
implies that $\lt$ does not depend on~$u_\ik$ either.
(One can show this easily, using the rules~\er{csuf}).
Hence a trivial MLR does not provide any information about equation~\er{sdde}.

A MLR is called \emph{fake} if it is gauge equivalent to a trivial MLR.
That is, fake MLRs are obtained from trivial MLRs by applying gauge transformations, 
and this can be done independently of equation~\er{sdde}.
\end{definition}

\begin{definition}
\lb{dwd}
Fix $\sm\in\zsp$.
Let $\ws{\sm}$ be the vector space of all $\sm\times\sm$ matrix-functions of the form $Q=Q(\nt,\ud,\la)$.
Let $\hs{\sm}\subset\ws{\sm}$ be the subset of all invertible $\sm\times\sm$ matrix-functions $M=M(\nt,\ud,\la)$.
For each $M\in\hs{\sm}$, we consider the operator $\nab_{M}\cl\ws{\sm}\to\ws{\sm}$ given by the formula
\begin{gather}
\lb{nbmq}
\nab_{M}(Q)=\sho(M^{-1}QM).
\end{gather}
The operator $\nab_{M}$ is invertible, and one has
\begin{gather}
\lb{nbm1q}
\nab_{M}^{-1}(Q)=M\cdot\sho^{-1}(Q)\cdot M^{-1}.
\end{gather}
For each $k\in\zz$, we denote by $\nab_{M}^k$ the $k$th power 
of the operator~$\nab_{M}$. In particular, one has
\begin{gather*}
\nab_{M}^0(Q)=Q,\qquad\quad
\nab_{M}^2(Q)=\nab_{M}(\nab_{M}(Q))=\sho(M^{-1}\sho(M^{-1}QM)M),\\
\nab_{M}^{-2}(Q)=\nab_{M}^{-1}(\nab_{M}^{-1}(Q))=M\sho^{-1}(M\sho^{-1}(Q)M^{-1})M^{-1}.
\end{gather*}
Furthermore, for each $j=1,\ldots,\ndu$, we consider the map $\euop{u^j}\cl\hs{\sm}\to\ws{\sm}$ given by
\begin{gather}
\lb{foreo}
\euop{u^j}(M)=\sum_{\ik\in\zz}\nab_M^{-\ik}\big(\dd{u^j_\ik}(M)\cdot M^{-1}\big),\qquad\quad
M=M(\nt,\ud,\la)\in\hs{\sm},\qquad
j=1,\ldots,\ndu.
\end{gather}
For instance, 
if $M=M\big(\nt,u^\xi_{-1},u^\xi_{0},u^\xi_{1},\la\big)$ depends only on~$\nt$, $u^\xi_{-1}$, 
$u^\xi_{0}$, $u^\xi_{1}$, $\,\xi=1,\ldots,\ndu$, and on~$\la$, then 
\begin{multline*}
\euop{u^j}\big(M\big(\nt,u^\xi_{-1},u^\xi_{0},u^\xi_{1},\la\big)\big)=
\sum_{\ik\in\zz}\nab_M^{-\ik}\big(\dd{u^j_\ik}(M)\cdot M^{-1}\big)=
\sum_{\ik=-1,0,1}\nab_M^{-\ik}\big(\dd{u^j_\ik}(M)\cdot M^{-1}\big)=\\
=
\sho\big(M^{-1}\big(\dd{u^j_{-1}}(M)\cdot M^{-1}\big)M\big)
+\dd{u^j_0}(M)\cdot M^{-1}+M\sho^{-1}\big(\dd{u^j_1}(M)\cdot M^{-1}\big)M^{-1}.
\end{multline*}
\end{definition}

\begin{definition}
\lb{dord}
Let $\ial,\ibe\in\zz$, $\,\ial\le\ibe$. A MLR $\big(\mm(n,\ud,\la),\,\lt(n,\ud,\la)\big)$ 
is said to be \emph{\ofc{\ial}{\ibe}} 
if its \mbox{$\sho$-part}~$\mm(n,\ud,\la)$ is of the form $\mm=\mm(n,u_\ial,\dots,u_\ibe,\la)$. 
That is, a MLR $(\mm,\lt)$ is \ofc{\ial}{\ibe} if $\mm$ may depend only on $n,u_\ial,\dots,u_\ibe,\la$,
while $\lt$ may depend on any finite number of the variables~\er{uldiu} and on~$\la$.
\end{definition}

Let $\ial,\ibe,\imm\in\zz$, $\,\ial\le\ibe$.
As shown in Lemma~\ref{lgs} in Section~\ref{srmlp}, 
any MLR $(\mm,\lt)$ \ofc{\ial}{\ibe} is gauge equivalent 
to some MLR \ofc{\ial\!+\!\imm}{\ibe\!+\!\imm} 
with $\sho$-part equal to~$\sho^\imm(\mm)$.
Therefore, since we want to study MLRs up to gauge equivalence,
it is sufficient to consider MLRs~\ofc{0}{k} for $k\in\zp$.

In Section~\ref{srmlp} we present the following results.
\begin{itemize}
\item Let $\ip\in\zsp$. 
For a given MLR~\ofc{0}{\ip} with $\sho$-part $\mm(\nt,u_0,u_1,\ldots,u_\ip,\la)$,
Theorem~\ref{thmlpk1} provides
a necessary and sufficient condition for the possibility to 
simplify the MLR by means of gauge transformations in the following sense: \\
the possibility to eliminate the dependence on~$u_\ip$ in the $\sho$-part 
by applying a gauge transformation (i.e.~to obtain a gauge equivalent MLR 
with $\sho$-part of the form $\widehat{\mm}(\nt,u_0,\ldots,u_{\ip-1},\la)$).
If a MLR satisfies this condition (given in~\er{pdm}), 
then Theorem~\ref{thmlpk1} provides an explicit formula 
for a suitable gauge transformation (see~\er{guum}).

In other words, if we have a MLR $(\mm,\lt)$ \ofc{0}{\ip}, then
Theorem~\ref{thmlpk1} gives a necessary and sufficient condition for
existence of a gauge transformation $\gbf=\gbf(\nt,\ud,\la)$ 
such that the corresponding gauge equivalent MLR~\er{ggtmtu} is \ofc{0}{\ip\!-\!1}.
Applications of this theorem are discussed below.

(Note that Theorem~\ref{thmlpk1} in Section~\ref{srmlp} 
generalizes and extends~\cite[Theorem~1]{IgSimpl2024},
since \cite[Theorem~1]{IgSimpl2024} considers only 
autonomous MLRs and gives only a sufficient condition for the described problem 
to simplify a MLR by means of gauge transformations.)

\item 
For each $j=1,\ldots,\ndu$, using the map $\euop{u^j}\cl\hs{\sm}\to\ws{\sm}$ defined by~\er{foreo},
in Theorem~\ref{th1} we obtain property~\er{gmg}, which essentially says the following: \\
when an invertible matrix $M=M(\nt,\ud,\la)$ is transformed to 
$\widetilde{M}=\sho(G)\cdot M\cdot G^{-1}$ by a gauge transformation $G=G(\nt,\ud,\la)$, 
the corresponding matrix $\euop{u^j}\big(\widetilde{M}\big)$ can be obtained 
from $\euop{u^j}(M)$ by conjugation as follows 
$\euop{u^j}\big(\widetilde{M}\big)=\sho(G)\cdot\euop{u^j}(M)\cdot\sho(G)^{-1}$.

Note that in Remark~\ref{rinvg} 
we discuss relations of our results on differential-difference MLRs
with the results of  S.Yu.~Sakovich~\cite{sakov95}
and M.~Marvan~\cite{marvan93,marvan97} 
on zero-curvature representations of partial differential equations.

Using Theorems~\ref{thmlpk1},~\ref{th1} and Lemmas~\ref{lgs}--\ref{leop0}, 
we prove Theorem~\ref{thtriv} described below.
Further applications of Theorem~\ref{th1} are discussed in Remark~\ref{rinvg}.

\item Theorem~\ref{thtriv} says that a MLR $\big(\mm(n,\ud,\la),\,\lt(n,\ud,\la)\big)$
is gauge equivalent to a trivial MLR if and only if 
one has $\euop{u^j}(\mm)=0$ for all $j=1,\ldots,\ndu$.
Therefore, a MLR $(\mm,\lt)$ is not fake (in the sense of Definition~\ref{dtrmlp})
if and only if there is $j\in\{1,\ldots,\ndu\}$ such that $\euop{u^j}(\mm)\neq 0$.

Note that the question ``how to determine whether a given MLR is not fake'' 
was studied in~\cite[Theorem~2]{IgJGP2025} for a much smaller class of MLRs 
(namely, for MLRs with $\sho$-part of the form $\mm(u_0,u_1,u_2,\la)$), using a different method.
\end{itemize}

\begin{remark}
\lb{rinvg}
Using Theorem~\ref{th1} and other properties of the maps $\nab_{M}$, $\,\euop{u^j}$ 
discussed in Definition~\ref{dwd} and in Section~\ref{srmlp},
in the preprint~\cite{gaprep2026} we introduce and study invariants 
with respect to the action of gauge transformations on MLRs for 
(1+1)-dimensional evolutionary differential-difference (semi-discrete) equations.
As explained in~\cite{gaprep2026}, 
our approach to gauge invariants for differential-difference (semi-discrete) MLRs
is inspired by the results of S.Yu.~Sakovich~\cite{sakov95} and
M.~Marvan~\cite{marvan93,marvan97} on 
gauge transformations and gauge invariants 
for zero-curvature representations (ZCRs) of partial differential equations.

Also, as discussed in~\cite{gaprep2026},
our Definition~\ref{dwd} and Theorems~\ref{th1},~\ref{thtriv} in the differential-difference setting
are inspired by the approaches of S.Yu.~Sakovich~\cite{sakov95}
and M.~Marvan~\cite{marvan93,marvan97} to ZCRs of partial differential equations.
\end{remark}

\begin{remark}
\lb{rrelab}
According to Definition~\ref{dmlpgt}, 
in a MLR~$(\mm,\lt)$ for equation~\er{sdde}
the matrix~$\mm=\mm(\nt,\ud,\la)$ may depend on~$n$, 
on a finite number of the 
dynamical variables~$u_{\ik}=(u^1_{\ik},\ldots,u^\ndu_{\ik})$, $\,\ik\in\zz$, 
and a parameter~$\la$. For any fixed integers $q_1,\ldots,q_\ndu$, we can relabel 
\begin{gather}
\lb{relab}
u^1_\ik\,\mapsto\,u^1_{\ik+q_1},\,\quad\ldots\,\quad,\,\quad 
u^{\ndu}_\ik\,\mapsto\,u^{\ndu}_{\ik+q_\ndu}\qquad\quad\forall\,\ik\in\zz.
\end{gather}
Relabeling~\er{relab} means that in equation~\er{sdde} 
we make the following invertible change of variables 
\begin{gather}
\notag
u^1(n,t)\,\mapsto\,u^1(n\!+\!q_1,t),\,\quad\ldots\,\quad,\,\quad 
u^{\ndu}(n,t)\,\mapsto\,u^{\ndu}(n\!+\!q_\ndu,t).
\end{gather}
Applying a relabeling~\er{relab} with suitable fixed $q_1,\ldots,q_\ndu\in\zz$, 
one can transform the matrix~$\mm$ to the form 
$\mm=\mm(\nt,u_0,\ldots,u_k,\la)$ for some $k\in\zp$
(so that the MLR becomes~\ofc{0}{k}).
\end{remark}

In what follows, for any function $w=w(n,t)$ and each $\ell\in\zz$, 
we denote by~$w_\ell$ the function $w_\ell(n,t)=w(n+\ell,t)$.
That is, $w_\ell=\sho^\ell(w)$. In particular, $w_0=w$.

Now let $\tua,\tub\in\zz$, $\,\tua\le\tub$, 
and consider another (1+1)-dimensional evolutionary differential-difference equation
\begin{gather}
\lb{vdde}
\pd_t(\mv)=\tff(\nt,\mv_\tua,\mv_{\tua+1},\dots,\mv_\tub)
\end{gather}
for an $\ndu$-component vector-function $\mv=\big(\mv^1(n,t),\dots,\mv^{\ndu}(n,t)\big)$.
In the right-hand side of~\er{vdde} one has
an $\ndu$-component vector-function $\tff=\big(\tff^1,\dots,\tff^{\ndu}\big)$.

\begin{definition}
\lb{defmtt}
A \emph{Miura-type transformation} (MT) from equation~\eqref{vdde} 
to equation~\eqref{sdde} is determined by an expression of the form
\begin{gather}
\lb{uvf}
u=\mf(\nt,\mv_\ik,\dots)
\end{gather}
with the following requirements:
\begin{enumerate}
	\item The right-hand side of~\er{uvf} 
	is an $\ndu$-component vector-function $\mf=(\mf^1,\dots,\mf^{\ndu})$ 
	which  may depend on~$n$ and on a finite number of the dynamical variables 
	$\mv_\ik=\big(\mv^1_\ik,\dots,\mv^\ndu_\ik\big)$, $\,\ik\in\zz$.
	\item If $\mv=\mv(n,t)$ satisfies equation~\eqref{vdde} 
	then $u=u(n,t)$ determined by~\eqref{uvf} satisfies equation~\eqref{sdde}.
\end{enumerate}
More precisely, the second requirement means that we must have relations~\er{dtmfi} explained below.
The components of the vector formula~\eqref{uvf} are 
\begin{gather}
\lb{uimfi}
u^i=\mf^i\big(\nt,\mv^\iup_\ik,\dots\big),\qquad\quad i=1,\dots,\ndu.
\end{gather}
Recall that \er{sdde} is equivalent to~\er{msdde}.
Substituting the right-hand side of~\eqref{uimfi} in place of~$u^i$ in~\eqref{msdde}, we get 
\begin{gather}
\lb{dtmfi}
\tiltdt\big(\mf^i\big(\nt,\mv^\iup_\ik,\dots\big)\big)=
\ff^i\big(\nt,\sho^\ua(\mf),\sho^{\ua+1}(\mf),\dots,\sho^\ub(\mf)\big),\qquad\quad
i=1,\dots,\ndu,
\end{gather}
which must be valid identically in the variables $n$ and $\mv^\xi_\ell$.
Here $\tiltdt$ is the total derivative operator corresponding to~\eqref{vdde}, 
so that in the left-hand side of~\er{dtmfi} we have 
$\tiltdt\big(\mf^i\big(\nt,\mv^\iup_\ik,\dots\big)\big)=%
\sum_{\ell,\xi}\sho^\ell\big(\tff^\xi\big)\cdot\big({\pd\mf^i}/{\pd\mv^\xi_\ell}\big)$,
where $\tff^\xi=\tff^\xi\big(\nt,\mv_\tua,\dots,\mv_\tub\big)$,
$\,\xi=1,\ldots,\ndu$, are the components of the vector-function 
$\tff=\big(\tff^1,\dots,\tff^{\ndu}\big)$ from~\eqref{vdde}.
\end{definition}
MTs for differential-difference equations
are a discrete analog of MTs for partial differential equations~\cite{miura1968}.
MTs for partial differential equations are sometimes called \emph{differential substitutions}.

\begin{definition}
\lb{dtame}
We consider MLRs for a given equation~\er{sdde} imposed 
on an $\ndu$-component vector-function $u=(u^1,\ldots,u^{\ndu})$.
We say that a MLR $(\mm,\lt)$ is \emph{$u$-tame}
if its $\sho$-part is of the form $\mm=\mm(\nt,u_0,\la)$.
That is, $(\mm,\lt)$ is $u$-tame if $\mm$ may depend only on $\nt$, 
$u_0=(u^1_0,\ldots,u^{\ndu}_0)$, $\la$.
\end{definition}

\begin{remark}
\lb{rgnla}
Consider a $u$-tame MLR $\big(\mm(\nt,u_0,\la),\,\lt(\nt,\ud,\la)\big)$.
Let $G=G(\nt,\la)$ be a gauge transformation such that $G$ may depend only on $\nt,\,\la$.
Then the corresponding gauge equivalent MLR
\begin{gather}
\notag
\check\mm=\sho(G)\cdot\mm\cdot G^{-1},\qquad\quad
\check\lt=\tdt(G)\cdot G^{-1}+G\cdot\lt\cdot G^{-1}
\end{gather}
is $u$-tame as well, since its $\sho$-part is
$\check\mm=\sho(G)\cdot\mm\cdot G^{-1}=G(\nt\!+\!1,\la)\cdot\mm(\nt,u_0,\la)\cdot G(\nt,\la)^{-1}$.
\end{remark}

The following observation is well known: \\
when one makes a classification for some class 
of integrable (partial differential, difference or differential-difference) 
equations with two independent variables,
it is often possible to select a few basic equations (with some parameters)
such that all the other equations from the considered class can be derived from 
the basic ones by means of MTs 
(see, e.g.,~\cite{drin-sok85,mss91,yam2006,garif2017,garif2018,meshk2008,LWY-book2022} 
and references therein).
Also, it is well known that MTs often help to obtain conservation laws 
and B\"acklund transformations for equations of various types
(see, e.g.,~\cite{miura-cl,suris2003,HJN-book2016,LWY-book2022}).
Hence it is desirable to develop constructions of MTs.

In Section~\ref{smt} we demonstrate a technique to construct MTs 
(including non-autonomous examples)
by means of MLRs and gauge transformations, using the following steps:
\begin{itemize}
	\item Having a MLR, which is not necessarily $u$-tame, we first try to bring it 
	to a $u$-tame form, using a relabeling from Remark~\ref{rrelab} and 
	gauge transformations~$\gbf(\nt,\ud,\la)$ from Theorem~\ref{thmlpk1}.
	\item Having obtained a $u$-tame MLR 
	$\big(\mm(\nt,u_0,\la),\,\lt(\nt,\ud,\la)\big)$, we try to simplify it by means of 
	gauge transformations of the form~$G(\nt,\la)$ depending only on $\nt,\,\la$
	in agreement with Remark~\ref{rgnla}.
	\item After that, one can often derive MTs 
	by means of reductions of the auxiliary system~\er{syspsi} 
corresponding to the obtained simplified MLR.
(Note that, in some examples in Section~\ref{smt}, we use~$\mv$ instead of~$u$.)
\end{itemize}
This technique extends the methods of~\cite{IgSimpl2024,IgJGP2025,ChisIg2025}, 
since in~\cite{IgSimpl2024,IgJGP2025,ChisIg2025} only the autonomous case was considered.

As applications and illustrations of the presented general theory on MLRs, 
we construct several new integrable equations (with new MLRs) connected 
by new MTs to known integrable equations from~\cite{KMX2015,MMBW2013}.
Namely, in Section~\ref{smt} we obtain the following results.
\begin{itemize}
	\item We consider the integrable 
	$2$-component (non-autonomous differential-difference) equation~\er{kmx} with 
	the MLR~\er{mlrbc}, which were derived by 
	S.~Konstantinou-Rizos, A.V.~Mikhailov, P.~Xenitidis~\cite{KMX2015}  
	in a study of Darboux transformations for a Lax operator 
	of the (partial differential) nonlinear Schr\"odinger equation. 
	In~\er{kmx},~\er{mlrbc} one has an arbitrary function~$\kmx(n)$.
	(Equation~\er{kmx} is given in~\cite[page~7]{KMX2015} in different notation,
using functions $p(n,x)$, $q(n,x)$, instead of $\pe^1(n,t)$, $\pe^2(n,t)$.)

We construct new integrable $2$-component 
(non-autonomous differential-difference) equations \er{dtmv}, \er{tdtmv}
and new MTs~\er{mtpeqq},~\er{tmtkmx}. 
The MT~\er{mtpeqq} is from~\er{dtmv} to~\er{kmx}, 
while the MT~\er{tmtkmx} is from~\er{tdtmv} to~\er{dtmv}.
This implies that the composition of the MTs~\er{mtpeqq},~\er{tmtkmx} 
gives a MT from~\er{tdtmv} to~\er{kmx}.

In formulas \er{dtmv}, \er{tdtmv}, \er{mtpeqq}, \er{tmtkmx} we have arbitrary constants~$\mc,\ap\in\fik$.
In~\er{tmtkmx},~\er{tdtmv} we have also functions~$\mX$,~$\mY$ given by~\er{xy}.
Furthermore, in~\er{tdtmv} one has the function~$\mH$ obtained by substituting~\er{tuellmt} 
in the right-hand side of the second component of~\er{dtmv}.
(We do not present the explicit formula for~$\mH$, since it is rather cumbersome.)

\item We consider also the integrable equation~\er{MMBW} derived by
E.~Mansfield, G.~Mar\'i Beffa, Jing~Ping~Wang \cite[Section~5.1.1]{MMBW2013}
in a study of invariant time evolutions of polygons in the centro-affine plane.
As shown in~\cite[Section~5.1.1]{MMBW2013}, one has the MT~\er{mtmbw}
from~\er{MMBW} to equation~\er{todafm}, 
which is the Toda lattice in the Flaschka--Manakov coordinates~\cite{FlaschPhR74,Manak75}.

We construct a new integrable $2$-component equation~\er{ceqnew} 
and a new MT~\er{mtmtd} from~\er{ceqnew} to~\er{MMBW}.
The composition of the MTs~\er{mtmbw},~\er{mtmtd} gives a MT from~\er{ceqnew} 
to the Toda lattice~\er{todafm}.
\item Furthermore, we obtain several new MLRs: \\
the MLR~\er{2hhm} for equation~\er{dtmv}, the MLR~\er{hmmtd} for equation~\er{MMBW},
and the MLRs described in Remarks~\ref{rlax1},~\ref{rlax2} for equations~\er{tdtmv},~\er{ceqnew}.
Each of these MLRs has a parameter~$\la\in\fik$.
\end{itemize}

The constructed equations~\er{dtmv},~\er{tdtmv},~\er{ceqnew} are integrable
for the following reasons:
\begin{itemize}
	\item As explained above, equations~\er{dtmv},~\er{tdtmv} are connected by MTs 
	to the integrable equation~\er{kmx} from~\cite{KMX2015},
	while \er{ceqnew} is connected by MTs 
	to the integrable equation~\er{MMBW} from~\cite[Section~5.1.1]{MMBW2013}
	and to the Toda lattice~\er{todafm}.
	\item As discussed above, each of equations~\er{dtmv},~\er{tdtmv},~\er{ceqnew}
	possesses a MLR with parameter~$\la\in\fik$.
\end{itemize}

\section{General results on differential-difference matrix Lax representations}
\lb{srmlp}

Let $\sm\in\zsp$.
In this section, all matrix-functions (including gauge transformations)
are of size~$\sm\times\sm$.

\begin{lemma}
\label{lgs}
Let $\ial,\ibe\in\zz$, $\,\ial\le\ibe$.
Let $\big(\mm(n,u_\ial,\dots,u_\ibe,\la),\,\lt(n,\ud,\la)\big)$ be a MLR \ofc{\ial}{\ibe}.
Since the matrix $\mm=\mm(n,u_\ial,\dots,u_\ibe,\la)$ is invertible, 
we can consider the gauge transformations
\begin{gather}
\lb{gtm}
\glem:=\mm,\qquad\quad\breve{\glem}:=\sho^{-1}(\mm^{-1})
\end{gather}
and the corresponding gauge equivalent MLRs
\begin{gather}
\lb{gmlr}
\widehat{\mm}:=\sho(\glem)\cdot \mm\cdot\glem^{-1},\qquad\quad
\widehat\lt:=\tdt(\glem)\cdot\glem^{-1}+\glem\cdot\lt\cdot\glem^{-1},\\
\lb{bgmlr}
\breve{\mm}:=\sho\big(\breve{\glem}\big)\cdot \mm\cdot\breve{\glem}^{-1},\qquad\quad
\breve{\lt}:=\tdt\big(\breve{\glem}\big)\cdot\breve{\glem}^{-1}+\breve{\glem}\cdot\lt\cdot\breve{\glem}^{-1}.
\end{gather}
Then we have
\begin{gather}
\lb{spts}
\widehat{\mm}=\sho(\mm),\qquad\quad
\breve{\mm}=\sho^{-1}(\mm),
\end{gather}
which implies that the MLR~\er{gmlr} is \ofc{\ial\!+\!1}{\ibe\!+\!1}, and
the MLR~\er{bgmlr} is \ofc{\ial\!-\!1}{\ibe\!-\!1}.

Furthermore, for each $\imm\in\zz$, the considered MLR $(\mm,\lt)$ \ofc{\ial}{\ibe}
is gauge equivalent to some MLR \ofc{\ial\!+\!\imm}{\ibe\!+\!\imm} 
with $\sho$-part equal to~$\sho^\imm(\mm)$.
\end{lemma}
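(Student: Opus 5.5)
The plan is to verify \eqref{spts} by direct substitution, then obtain the last assertion by iterating the two gauge transformations \eqref{gtm}.

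First, $\glem=\mm$ and $\breve{\glem}=\sho^{-1}(\mm^{-1})$ are invertible in the sense of Remark~\ref{rratf}, since their determinants are obtained from $\det\mm\not\equiv0$ by at most a shift and an inversion, and a shift does not make a nonzero rational function vanish identically. Hence Definition~\ref{dmlpgt} guarantees that \eqref{gmlr} and \eqref{bgmlr} are MLRs for~\eqref{sdde}, and only their $\sho$-parts need to be computed. For \eqref{gmlr} we get $\widehat{\mm}=\sho(\mm)\cdot\mm\cdot\mm^{-1}=\sho(\mm)$. For \eqref{bgmlr}, using that $\sho$ is multiplicative and commutes with matrix inversion, we have $\sho(\breve{\glem})=\mm^{-1}$ and $\breve{\glem}^{-1}=\sho^{-1}(\mm)$. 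Therefore $\breve{\mm}=\mm^{-1}\cdot\mm\cdot\sho^{-1}(\mm)=\sho^{-1}(\mm)$.

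By the rules \eqref{csuf}, $\sho^{\pm1}(\mm)$ depends only on $n$, on $u_{\ial\pm1},\dots,u_{\ibe\pm1}$, and on $\la$. This gives the classes $(\ial\!+\!1,\ibe\!+\!1)$ and $(\ial\!-\!1,\ibe\!-\!1)$.

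For the final claim, argue by induction on $|\imm|$. The case $\imm=0$ is trivial. Given a MLR \ofc{\ial+\imm}{\ibe+\imm} with $\sho$-part $\sho^\imm(\mm)$ that is gauge equivalent to $(\mm,\lt)$, apply the first part of the lemma to it. This yields an MLR with $\sho$-part $\sho^{\imm\pm1}(\mm)$, of the correspondingly shifted class. Gauge equivalence is transitive, because the composition of $\gbf_1$ followed by $\gbf_2$ is again of the form \eqref{ggtmtu} with gauge matrix $\gbf_2\gbf_1$. Checking this requires only a routine computation using the Leibniz rule for $\tdt$. There is no real obstacle here; the only point needing care is the identity $\sho(\sho^{-1}(\mm^{-1}))=\mm^{-1}$ together with the compatibility of shifts with inverses.
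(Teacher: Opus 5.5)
Your proposal is correct and follows the paper's proof: substituting \eqref{gtm} into \eqref{gmlr} and \eqref{bgmlr} gives \eqref{spts}, and the last assertion follows by induction on $\imm$. Beyond the paper, you make explicit that the gauge matrices are invertible and that gauge equivalence is transitive (via the composite gauge matrix $\gbf_2\gbf_1$), both of which the paper leaves implicit.
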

\begin{proof}
Substituting~\er{gtm} in~\er{gmlr},~\er{bgmlr}, we get~\er{spts}.
Therefore, the MLR $(\mm,\lt)$ \ofc{\ial}{\ibe} is gauge equivalent 
to the MLR~\er{gmlr} \ofc{\ial\!+\!1}{\ibe\!+\!1} with $\sho$-part equal to~$\sho(\mm)$
and is gauge equivalent 	to the MLR~\er{bgmlr} \ofc{\ial\!-\!1}{\ibe\!-\!1} 
with $\sho$-part equal to~$\sho^{-1}(\mm)$.

Now, the last statement of the lemma can be proved easily by induction on~$\imm$.
\end{proof}

\begin{remark}
\lb{rchv}
In Theorem~\ref{thmlpk1} below, 
we consider an invertible $\sm\times\sm$ matrix-function 
$$
\mm=\mm(\nt,u_0,u_1,\ldots,u_\ip,\la)
$$ 
with $\ip\in\zsp$, and we need to choose a constant vector $a_0\in\fik^\ndu$ so that 
the matrix $\mm(\nt,a_0,u_1,\ldots,u_\ip,\la)$ is well defined and is invertible.
Here $\mm(\nt,a_0,u_1,\ldots,u_\ip,\la)$ is obtained from~$\mm(\nt,u_0,u_1,\ldots,u_\ip,\la)$ 
by substituting $u_0=a_0$.
The term ``invertible'' is understood in the sense of Remark~\ref{rratf}.

For a given invertible matrix $\mm(\nt,u_0,\ldots,u_\ip,\la)$, 
such a vector $a_0\in\fik^\ndu$ can be chosen as follows. 
For $i,j\in\{1,\ldots,\sm\}$ the $(i,j)$ entry of the matrix $\mm$ 
is denoted by $\mme^{ij}=\mme^{ij}(\nt,u_0,\ldots,u_\ip,\la)$.
Since, according to Remark~\ref{rratf}, $\mme^{ij}(\nt,u_0,\ldots,u_\ip,\la)$ 
is assumed to be a rational function of~$\nt,\,u_\ik=(u^1_\ik,\ldots,u^{\ndu}_\ik)$ 
with coefficients in~$\mer$, where $\mer$ is the field of meromorphic functions of~$\la\in\bC$, we have
\begin{gather*}
\mme^{ij}(\nt,u_0,\ldots,u_\ip,\la)=\frac{f^{ij}(\nt,u_0,\ldots,u_\ip,\la)}{g^{ij}(\nt,u_0,\ldots,u_\ip,\la)},
\quad\qquad i,j=1,\ldots,\sm,\\
\det\big(\mm(\nt,u_0,\ldots,u_\ip,\la)\big)=\frac{P(\nt,u_0,\ldots,u_\ip,\la)}{Q(\nt,u_0,\ldots,u_\ip,\la)},
\end{gather*}
where 
\begin{itemize}
	\item $f^{ij}(\nt,u_0,\ldots,u_\ip,\la)$, $\,g^{ij}(\nt,u_0,\ldots,u_\ip,\la)$, 
	$\,P(\nt,u_0,\ldots,u_\ip,\la)$, $\,Q(\nt,u_0,\ldots,u_\ip,\la)$ 
	are some polynomials in the variables~$\nt,\,u_\ik=(u^1_\ik,\ldots,u^{\ndu}_\ik)$ with coefficients in~$\mer$,
	\item each of the polynomials
\begin{gather*}
g^{ij}(\nt,u_0,u_1,\ldots,u_\ip,\la),\qquad i,j=1,\ldots,\sm,\qquad
P(\nt,u_0,u_1,\ldots,u_\ip,\la),\qquad Q(\nt,u_0,u_1,\ldots,u_\ip,\la)
\end{gather*}
is not the zero polynomial.
\end{itemize}
Then there is an open dense subset $W\subset\fik^\ndu$ such that 
for any constant vector $a_0\in W$ each of the polynomials
\begin{gather*}
g^{ij}(\nt,a_0,u_1,\ldots,u_\ip,\la),\qquad i,j=1,\ldots,\sm,\qquad
P(\nt,a_0,u_1,\ldots,u_\ip,\la),\qquad Q(\nt,a_0,u_1,\ldots,u_\ip,\la)
\end{gather*}
is not the zero polynomial (as a polynomial in~$\nt,\,u_\ik$).
This implies that for any $a_0\in W$
the matrix $\mm(\nt,a_0,u_1,\ldots,u_\ip,\la)$ is well defined and is invertible 
(in the sense of Remark~\ref{rratf}).
\end{remark}

\begin{theorem}
\label{thmlpk1}
Let $\ip\in\zsp$. 
Consider an invertible $\sm\times\sm$ matrix-function $\mm=\mm(\nt,u_0,u_1,\ldots,u_\ip,\la)$. 
The following two properties are equivalent.
\begin{itemize}
\item \textbf{\textup{Property~1}}.
There is a gauge transformation~$\gbf=\gbf(\nt,\ud,\la)$ such that the matrix
\begin{gather}
\lb{hmgm}
\widehat\mm=\sho(\gbf)\cdot\mm(\nt,u_0,\ldots,u_\ip,\la)\cdot\gbf^{-1}
\end{gather}
is of the form 
\begin{gather}
\lb{hmhm}
\widehat{\mm}=\widehat{\mm}(\nt,u_0,\ldots,u_{\ip-1},\la).
\end{gather}
\item \textbf{\textup{Property~2}}.
The matrix $\mm=\mm(\nt,u_0,u_1,\ldots,u_\ip,\la)$ obeys
\begin{gather}
\lb{pdm}
\forall\, i,j=1,\ldots,\ndu\qquad\quad
\dd{u^i_0}\big(\dd{u^j_\ip}(\mm)\cdot\mm^{-1}\big)=0.
\end{gather}
\end{itemize}

Now suppose that $\mm$ satisfies~\er{pdm}. 
Fix a constant vector $a_0=(a^1_0,\ldots,a^\ndu_0)\in\fik^\ndu$ 
and consider the gauge transformation
\begin{gather}
\lb{guum}
\gbf(\nt,u_0,\ldots,u_{\ip-1},\la):=\sho^{-1}\big(\mm(\nt,a_0,u_1,\ldots,u_\ip,\la)^{-1}\big),
\end{gather}
where $\mm(\nt,a_0,u_1,\ldots,u_\ip,\la)$ is obtained 
by substituting $u_0=a_0$ in~$\mm(\nt,u_0,u_1,\ldots,u_\ip,\la)$.
Here a vector $a_0\in\fik^\ndu$ is chosen so that the matrix\/ 
$\mm(\nt,a_0,u_1,\ldots,u_\ip,\la)$ is well defined and is invertible.
\textup{(}Remark~\textup{\ref{rchv}} explains how to choose such a vector.\textup{)}

Then the matrix $\widehat{\mm}$ determined by~\er{hmgm}
is of the form $\widehat{\mm}=\widehat{\mm}(\nt,u_0,\ldots,u_{\ip-1},\la)$, and we have
\begin{multline}
\lb{hmmm}
\widehat\mm=\sho(\gbf)\cdot\mm(\nt,u_0,u_1,\ldots,u_\ip,\la)\cdot\gbf^{-1}=\\
=\mm(\nt,a_0,u_1,\ldots,u_\ip,\la)^{-1}\cdot\mm(\nt,u_0,u_1,\ldots,u_\ip,\la)\cdot
\sho^{-1}\big(\mm(\nt,a_0,u_1,\ldots,u_\ip,\la)\big),
\end{multline}
where the right-hand side of~\er{hmmm} does not depend on~$u_\ip$.

Furthermore, the above results imply the following additional statements\textup{:}

Let $(\mm,\lt)$ be a MLR \ofc{0}{\ip}. Then \er{pdm} is a necessary and sufficient condition for
existence of a gauge transformation $\gbf=\gbf(\nt,\ud,\la)$
such that the corresponding gauge equivalent MLR
\begin{gather}
\lb{hm2}
\widehat \mm=\sho(\gbf)\cdot \mm\cdot\gbf^{-1},\qquad\quad
\widehat\lt=\tdt(\gbf)\cdot\gbf^{-1}+\gbf\cdot\lt\cdot\gbf^{-1}
\end{gather}
is \ofc{0}{\ip\!-\!1}. If\/ 
$\mm$ satisfies~\er{pdm} then one can take 
the gauge transformation~\er{guum} for this purpose.
\end{theorem}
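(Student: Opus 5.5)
The plan is to prove the two implications separately and then read off the statement about MLRs \ofc{0}{\ip}.

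\emph{Property~2 $\Rightarrow$ Property~1, via formula~\er{hmmm}.}
Set $N:=\mm(\nt,a_0,u_1,\ldots,u_\ip,\la)$. By~\er{guum} we have $\gbf=\sho^{-1}(N^{-1})$, hence $\sho(\gbf)=N^{-1}$ and $\gbf^{-1}=\sho^{-1}(N)$. This gives~\er{hmmm} immediately. Moreover $\sho^{-1}(N)$ depends only on $\nt,u_0,\ldots,u_{\ip-1},\la$, so it remains to show that $N^{-1}\mm$ does not depend on $u_\ip$.

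For each $j$ we compute
\begin{gather*}
\dd{u^j_\ip}\big(N^{-1}\mm\big)=N^{-1}\Big(\dd{u^j_\ip}(\mm)\mm^{-1}-\dd{u^j_\ip}(N)N^{-1}\Big)\mm .
\end{gather*}
By~\er{pdm}, the matrix $A_j:=\dd{u^j_\ip}(\mm)\cdot\mm^{-1}$ is independent of $u_0$. On the other hand, $\dd{u^j_\ip}(N)N^{-1}$ is exactly $A_j$ evaluated at $u_0=a_0$, and this coincides with $A_j$ itself. Hence the bracket vanishes.

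The substitution $u_0=a_0$ commutes with $\dd{u^j_\ip}$ and with inversion, since $N$ is well defined and invertible by Remark~\ref{rchv}. There is one technical point: a rational function whose $u_0$-derivatives all vanish is independent of $u_0$, so substituting $a_0$ is legitimate. This holds over the field $\mer$ of characteristic zero.

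\emph{Property~1 $\Rightarrow$ Property~2.}
Write $\mm=\sho(\gbf)^{-1}\widehat\mm\,\gbf$, where $\widehat\mm$ does not depend on $u_\ip$. Then
\begin{gather*}
\dd{u^j_\ip}(\mm)\mm^{-1}=\dd{u^j_\ip}\big(\sho(\gbf)^{-1}\big)\sho(\gbf)+\sho(\gbf)^{-1}\widehat\mm\,\dd{u^j_\ip}(\gbf)\gbf^{-1}\widehat\mm^{-1}\sho(\gbf).
\end{gather*}
Here $\gbf$ may depend on arbitrary shifts, so the key step is to show first that $\gbf$ is effectively constrained. I expect this to be the main obstacle.

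The approach is a "highest/lowest shift" argument. Let $\gbf$ depend on $u_r,\ldots,u_s$, with $\gbf$ genuinely depending on $u_s$ and on $u_r$.
\begin{itemize}
\item Suppose $s\ge\ip$. Then $\sho(\gbf)$ depends genuinely on $u_{s+1}$. Since $\mm$ and $\widehat\mm\gbf$ do not depend on $u_{s+1}$, the identity $\sho(\gbf)\mm=\widehat\mm\gbf$ is contradicted. Hence $s\le\ip-1$.
\item Comparing lowest shifts, if $r<0$ then $\gbf$ depends on $u_r$ while $\sho(\gbf)\mm$ does not. Hence $r\ge0$.
\end{itemize}
So $\gbf=\gbf(\nt,u_0,\ldots,u_{\ip-1},\la)$ and $\sho(\gbf)=\gbf(\nt+1,u_1,\ldots,u_\ip,\la)$. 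In the displayed formula the second term then vanishes, because $\gbf$ does not involve $u_\ip$. The remaining term $\dd{u^j_\ip}\big(\sho(\gbf)^{-1}\big)\sho(\gbf)$ depends only on $\nt,u_1,\ldots,u_\ip,\la$, so its $u_0$-derivative is zero, which is~\er{pdm}.

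(In the degenerate case where $\gbf$ depends on no $u$ at all, $\sho(\gbf)$ is also $u$-free and the same conclusion follows.)

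\emph{Statement about MLRs \ofc{0}{\ip}.}
The final assertion follows at once. Gauge transformations act on the $\sho$-part by~\er{hmgm}, and the gauge equivalent pair~\er{hm2} is an MLR by Definition~\ref{dmlpgt}. It is \ofc{0}{\ip\!-\!1} exactly when $\widehat\mm$ has the form~\er{hmhm}. By the equivalence just proved, this happens if and only if~\er{pdm} holds, and in that case the gauge transformation~\er{guum} works.
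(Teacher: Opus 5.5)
Your proposal is correct and follows essentially the same route as the paper. For Property~1 $\Rightarrow$ Property~2 you use the same maximal/minimal-shift argument to confine $\gbf$ to $\gbf(\nt,u_0,\ldots,u_{\ip-1},\la)$, and then read off that $\dd{u^j_\ip}(\mm)\mm^{-1}$ depends only on $\nt,u_1,\ldots,u_\ip,\la$. For the converse you reduce $\dd{u^j_\ip}(\widehat\mm)$ to the bracket $A_j-A_j|_{u_0=a_0}$, which is exactly the paper's matrix $\mathbf{L}^j$ and vanishes by \eqref{pdm}.
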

\begin{proof}
Let us show that \textbf{Property~1} implies \textbf{Property~2}.
Suppose that there is a gauge transformation~$\gbf=\gbf(\nt,\ud,\la)$ 
such that the matrix~\er{hmgm} is of the form~\er{hmhm}. We need to prove~\er{pdm}.

Let us first prove the following
\begin{gather}
\lb{dkg}
\forall\,q\ge\ip\qquad\forall\,j\in\{1,\ldots,\ndu\}\qquad\quad
\dd{u^j_q}(\gbf)=0.
\end{gather}
Suppose that~\er{dkg} does not hold.
Let $k$ be the maximal integer such that 
there is $j\in\{1,\ldots,\ndu\}$ satisfying $\dd{u^j_k}(\gbf)\neq 0$.
By our assumption, $k\ge\ip$, and one has $\dd{u^{\tilde{\jmath}}_{k+1}}(\gbf)=0$ 
for all $\tilde{\jmath}\in\{1,\ldots,\ndu\}$. From~\er{hmgm},~\er{hmhm} we get
\begin{gather}
\lb{csgbf}
\sho(\gbf)=\widehat{\mm}(\nt,u_0,\ldots,u_{\ip-1},\la)\cdot\gbf\cdot\big(\mm(\nt,u_0,\ldots,u_\ip,\la)\big)^{-1},\\
\lb{gbfeq}
\gbf=\big(\widehat{\mm}(\nt,u_0,\ldots,u_{\ip-1},\la)\big)^{-1}
\cdot\sho(\gbf)\cdot\mm(\nt,u_0,\ldots,u_\ip,\la).
\end{gather}
Using the rules~\er{csuf} and the property $\dd{u^j_k}(\gbf)\neq 0$,
for the left-hand side of equation~\er{csgbf} 
we derive $\dd{u^j_{k+1}}\big(\sho(\gbf)\big)=\sho\big(\dd{u^j_{k}}(\gbf)\big)\neq 0$.
On the other hand, since $k\ge\ip$ and $\dd{u^{j}_{k+1}}(\gbf)=0$,
the right-hand side of equation~\er{csgbf} does not depend on~$u^{j}_{k+1}$.
The obtained contradiction allows us to conclude that \er{dkg} actually holds.

Now let us prove that
\begin{gather}
\lb{dq0g}
\forall\,q\le -1\qquad\forall\,j\in\{1,\ldots,\ndu\}\qquad\quad
\dd{u^j_q}(\gbf)=0.
\end{gather}
Suppose that~\er{dq0g} does not hold.
Let $r$ be the minimal integer such that 
there is $j\in\{1,\ldots,\ndu\}$ satisfying $\dd{u^j_r}(\gbf)\neq 0$.
By our assumption, $r\le -1$, and one has $\dd{u^{\tilde{\jmath}}_{r-1}}(\gbf)=0$ 
for all $\tilde{\jmath}\in\{1,\ldots,\ndu\}$.
Therefore, the right-hand side of~\er{gbfeq} does not depend on~$u^j_r$.
On the other hand, for the left-hand side of~\er{gbfeq} we have $\dd{u^j_r}(\gbf)\neq 0$.
This contradiction shows that \er{dq0g} actually holds.

The obtained properties~\er{dkg},~\er{dq0g} imply that $\gbf$ is of the form 
$\gbf=\gbf(\nt,u_0,\dots,u_{\ip-1},\la)$. Then from~\er{hmgm} one gets
\begin{gather}
\lb{mhm}
\mm=\sho\big(\gbf(\nt,u_0,\dots,u_{\ip-1},\la)^{-1}\big)\cdot
\widehat{\mm}(\nt,u_0,\dots,u_{\ip-1},\la)\cdot\gbf(\nt,u_0,\dots,u_{\ip-1},\la).
\end{gather}
Using~\er{mhm}, for each $j=1,\dots,\ndu$ we derive
\begin{multline}
\lb{dujmg}
\dd{u^j_\ip}(\mm)\cdot\mm^{-1}=
\dd{u^j_\ip}\Big(\sho\big(\gbf(\nt,u_0,\dots,u_{\ip-1},\la)^{-1}\big)\cdot
\widehat{\mm}(\nt,u_0,\dots,u_{\ip-1},\la)\cdot\gbf(\nt,u_0,\dots,u_{\ip-1},\la)\Big)\cdot\mm^{-1}=\\
=\sho\Big(\dd{u^j_{\ip-1}}\big(\gbf(\nt,u_0,\dots,u_{\ip-1},\la)^{-1}\big)\Big)\cdot
\widehat{\mm}\cdot\gbf\cdot\big(\sho(\gbf^{-1})\cdot
\widehat{\mm}\cdot\gbf\big)^{-1}=\\
=\sho\Big(\dd{u^j_{\ip-1}}\big(\gbf(\nt,u_0,\dots,u_{\ip-1},\la)^{-1}\big)\Big)\cdot
\widehat{\mm}\cdot\gbf\cdot\gbf^{-1}\cdot\widehat{\mm}^{-1}\cdot
\big(\sho(\gbf^{-1})\big)^{-1}=\\
=\sho\Big(\dd{u^j_{\ip-1}}\big(\gbf(\nt,u_0,\dots,u_{\ip-1},\la)^{-1}\big)\Big)\cdot
\sho\big(\gbf(\nt,u_0,\dots,u_{\ip-1},\la)\big).
\end{multline}
The rules~\er{csuf} imply that the right-hand side of~\er{dujmg} does not depend on~$u^i_0$ 
for any $i=1,\dots,\ndu$. This yields~\er{pdm}.

Now let us prove that \textbf{Property~2} implies \textbf{Property~1},
along with the statement about the gauge transformation~\er{guum}.

Suppose that $\mm=\mm(\nt,u_0,u_1,\dots,u_\ip,\la)$ satisfies~\er{pdm}. 
According to Remark~\ref{rchv}, 
we can choose a constant vector $a_0\in\fik^\ndu$ such that
$\mm(\nt,a_0,u_1,\dots,u_\ip,\la)^{-1}$ is well defined.
This allows us to consider the gauge transformation~\er{guum} 
and the corresponding matrix~$\widehat{\mm}$ determined by~\er{hmgm}. 
Since for~\er{guum} one has
\begin{gather*}
\sho(\gbf)=\mm(\nt,a_0,u_1,\dots,u_\ip,\la)^{-1},\qquad\quad
\gbf^{-1}=\sho^{-1}\big(\mm(\nt,a_0,u_1,\dots,u_\ip,\la)\big),
\end{gather*}
we obtain~\er{hmmm}, which implies that the matrix-function~$\widehat{\mm}$
may depend only on $\nt,u_0,u_1,\dots,u_\ip,\la$.
In order to show that actually $\widehat{\mm}$ is of the form~\er{hmhm}, it remains to prove that 
\begin{gather}
\lb{dujphm}
\dd{u^j_\ip}\big(\widehat{\mm}\big)=0\qquad\quad\forall\, j=1,\dots,\ndu.
\end{gather}

Let $j\in\{1,\dots,\ndu\}$. Set 
$\mathbf{H}:=\gbf^{-1}(\nt,u_0,\dots,u_{\ip-1},\la)=\sho^{-1}\big(\mm(\nt,a_0,u_1,\dots,u_\ip,\la)\big)$.
Using~\er{hmmm} and the property $\dd{u^j_\ip}(\mathbf{H})=0$, we get
\begin{multline}
\lb{pdhatm}
\dd{u^j_\ip}\big(\widehat\mm\big)
=\dd{u^j_\ip}\Big(\mm(\nt,a_0,u_1,\dots,u_\ip,\la)^{-1}\cdot \mm(\nt,u_0,u_1,\dots,u_\ip,\la)\Big)
\cdot\mathbf{H}=\\
=\Big(\dd{u^j_\ip}\big(\mm(\nt,a_0,u_1,\dots,u_\ip,\la)^{-1}\big)\cdot \mm(\nt,u_0,u_1,\dots,u_\ip,\la)+\\
+\mm(\nt,a_0,u_1,\dots,u_\ip,\la)^{-1}\cdot\dd{u^j_\ip}\big(\mm(\nt,u_0,u_1,\dots,u_\ip,\la)\big)\Big)
\cdot\mathbf{H}=\\
=\Big(-\mm(\nt,a_0,u_1,\dots,u_\ip,\la)^{-1}\cdot\dd{u^j_\ip}\big(\mm(\nt,a_0,u_1,\dots,u_\ip,\la\big)\cdot 
\mm(\nt,a_0,u_1,\dots,u_\ip,\la)^{-1}\cdot\\
\cdot\mm(\nt,u_0,u_1,\dots,u_\ip,\la)+
\mm(\nt,a_0,u_1,\dots,u_\ip,\la)^{-1}\cdot\dd{u^j_\ip}\big(\mm(\nt,u_0,u_1,\dots,u_\ip,\la)\big)\Big)
\cdot\mathbf{H}=\\
=\mm(\nt,a_0,u_1,\dots,u_\ip,\la)^{-1}\cdot\Big(-\dd{u^j_\ip}\big(\mm(\nt,a_0,u_1,\dots,u_\ip,\la)\big)\cdot
\mm(\nt,a_0,u_1,\dots,u_\ip,\la)^{-1}+\\
+\dd{u^j_\ip}\big(\mm(\nt,u_0,u_1,\dots,u_\ip,\la)\big)\cdot
\mm(\nt,u_0,u_1,\dots,u_\ip,\la)^{-1}\Big)\cdot \mm(\nt,u_0,u_1,\dots,u_\ip,\la)\cdot\mathbf{H}=\\
=\mm(\nt,a_0,u_1,\dots,u_\ip,\la)^{-1}\cdot \mathbf{L}^j(\nt,u_0,u_1,\dots,u_\ip,\la)\cdot \mm(\nt,u_0,u_1,\dots,u_\ip,\la)\cdot\mathbf{H},
\end{multline}
where 
\begin{multline}
\lb{Lu}
\mathbf{L}^j(\nt,u_0,u_1,\dots,u_\ip,\la)=-\dd{u^j_\ip}\big(\mm(\nt,a_0,u_1,\dots,u_\ip,\la)\big)\cdot
\mm(\nt,a_0,u_1,\dots,u_\ip,\la)^{-1}+\\
+\dd{u^j_\ip}\big(\mm(\nt,u_0,u_1,\dots,u_\ip,\la)\big)\cdot \mm(\nt,u_0,u_1,\dots,u_\ip,\la)^{-1}.
\end{multline}
From~\er{Lu} and~\er{pdm} it follows that
\begin{gather}
\lb{pdL}
\begin{gathered}
\dd{u^i_0}\big(\mathbf{L}^j(\nt,u_0,u_1,\dots,u_\ip,\la)\big)=
\dd{u^i_0}\Big(\dd{u^j_\ip}
\big(\mm(\nt,u_0,u_1,\dots,u_\ip,\la)\big)\cdot \mm(\nt,u_0,u_1,\dots,u_\ip,\la)^{-1}\Big)=0\\
\forall\, i=1,\dots,\ndu,
\end{gathered}\\
\lb{La0}
\mathbf{L}^j(\nt,a_0,u_1,\dots,u_\ip,\la)=0.
\end{gather}
Since, according to Remark~\ref{rratf}, the entries of 
the matrix-function $\mathbf{L}^j(\nt,u_0,u_1,\dots,u_\ip,\la)$
depend rationally on $\nt\in\zz$, $u_0,u_1,\dots,u_\ip\in\bC^\ndu$ 
and meromorphically on~$\la\in\bC$,
equations~\er{pdL},~\er{La0} imply that the matrix-function
$\mathbf{L}^j(\nt,u_0,u_1,\dots,u_\ip,\la)$ is identically zero.

Substituting $\mathbf{L}^j(\nt,u_0,u_1,\dots,u_\ip,\la)=0$ in~\er{pdhatm}, one gets~\er{dujphm}.
Therefore, the matrix-function~$\widehat{\mm}$ given by~\er{hmgm},~\er{hmmm} is of the form~\er{hmhm},
and the right-hand side of~\er{hmmm} does not depend on~$u_\ip$. 

Clearly, the obtained results immediately imply the additional statements (about MLRs) in the end of the theorem.
\end{proof}

In what follows, we use the notation and formulas from Definition~\ref{dwd}.
\begin{lemma}
\label{lem1}
Let $M=M(\nt,\ud,\la)$, $\,\gbf=\gbf(\nt,\ud,\la)$ 
be invertible $\sm\times\sm$ matrix-functions 
and $P=P(\nt,\ud,\la)$ be an arbitrary $\sm\times\sm$ matrix-function.
Set $\widetilde{M}=\sho(\gbf)\cdot M\cdot\gbf^{-1}$.
Then 
\begin{gather}
\lb{tmm}
\nab_{\widetilde{M}}^{\km}\big(\sho(\gbf)\cdot P\cdot \sho(\gbf)^{-1}\big)=
\sho(\gbf)\cdot \nab^{\km}_M(P)\cdot \sho(\gbf)^{-1}\qquad\quad
\forall\,{\km}\in\mathbb{Z}.
\end{gather}
\end{lemma}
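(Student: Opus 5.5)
Let $\mathcal{C}_{\gbf}\cl\ws{\sm}\to\ws{\sm}$ denote conjugation, $\mathcal{C}_{\gbf}(P)=\sho(\gbf)\cdot P\cdot\sho(\gbf)^{-1}$. This map is invertible with inverse $P\mapsto\sho(\gbf)^{-1}P\,\sho(\gbf)$. In this notation \er{tmm} says exactly that
\begin{gather*}
\nab^{\km}_{\widetilde{M}}\circ\mathcal{C}_{\gbf}=\mathcal{C}_{\gbf}\circ\nab^{\km}_M \qquad \forall\,\km\in\zz .
\end{gather*}
The plan is to verify this intertwining identity for $\km=1$ by a direct computation, and then obtain all other $\km$ formally.

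\textbf{Step 1: the case $\km=1$.} Since $\widetilde{M}=\sho(\gbf)M\gbf^{-1}$, we have $\widetilde{M}^{-1}=\gbf M^{-1}\sho(\gbf)^{-1}$. Substituting into \er{nbmq} gives
\begin{gather*}
\nab_{\widetilde{M}}\big(\sho(\gbf)P\sho(\gbf)^{-1}\big)
=\sho\big(\gbf M^{-1}\sho(\gbf)^{-1}\sho(\gbf)P\sho(\gbf)^{-1}\sho(\gbf)M\gbf^{-1}\big)
=\sho(\gbf)\,\sho(M^{-1}PM)\,\sho(\gbf)^{-1}.
\end{gather*}
Here we use that $\sho$ is multiplicative on products of matrix-functions, which follows from the rules \er{csuf}. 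The last expression equals $\mathcal{C}_{\gbf}(\nab_M(P))$, so $\nab_{\widetilde M}\circ\mathcal{C}_{\gbf}=\mathcal{C}_{\gbf}\circ\nab_M$.

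\textbf{Step 2: all $\km\ge 0$.} The case $\km=0$ is trivial. For $\km\ge1$ we argue by induction on $\km$, using $\nab^{\km+1}=\nab\circ\nab^{\km}$ together with Step 1.

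\textbf{Step 3: negative $\km$.} Because $\mathcal{C}_{\gbf}$ and $\nab_M$, $\nab_{\widetilde M}$ are invertible, the identity $\nab_{\widetilde M}\circ\mathcal{C}_{\gbf}=\mathcal{C}_{\gbf}\circ\nab_M$ can be rewritten as $\mathcal{C}_{\gbf}\circ\nab_M^{-1}=\nab_{\widetilde M}^{-1}\circ\mathcal{C}_{\gbf}$. To see this, compose on the left with $\nab_{\widetilde M}^{-1}$ and on the right with $\nab_M^{-1}$. Induction on $|\km|$ then gives the negative powers.

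As a cross-check one could instead verify the case $\km=-1$ directly from \er{nbm1q}:
\begin{gather*}
\widetilde M\,\sho^{-1}\big(\sho(\gbf)P\sho(\gbf)^{-1}\big)\widetilde M^{-1}
=\sho(\gbf)M\gbf^{-1}\gbf P\gbf^{-1}\gbf M^{-1}\sho(\gbf)^{-1}
=\sho(\gbf)\,\nab_M^{-1}(P)\,\sho(\gbf)^{-1}.
\end{gather*}

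The only delicate point is Step 1. It requires keeping the order of the factors and correctly inverting $\widetilde M$. Once that is done, the cancellations $\sho(\gbf)^{-1}\sho(\gbf)=\id$ happen inside $\sho(\cdot)$ and go through because $\sho$ is multiplicative.
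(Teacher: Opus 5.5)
Your proof is correct and is essentially the paper's argument: the paper does the same direct computation for $\km=1$, checks $\km=-1$ directly from \er{nbm1q} (this is your cross-check; your main route instead deduces it formally from $\km=1$ using invertibility of $\nab_M,\nab_{\widetilde M}$, which is equally valid), and then inducts on $\km$. There is one small slip in your cross-check: since $\sho^{-1}\big(\sho(\gbf)P\sho(\gbf)^{-1}\big)=\gbf\,\sho^{-1}(P)\,\gbf^{-1}$, the middle expression should contain $\sho^{-1}(P)$ rather than $P$, although the final equality you state is right.
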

\begin{proof}
We have
\begin{gather}
\lb{tmtm}
\widetilde{M}=\sho(\gbf)\cdot M\cdot\gbf^{-1},\qquad\quad
\widetilde{M}^{-1}=\gbf\cdot M^{-1}\cdot\sho(\gbf)^{-1},\qquad\quad
\sho(\gbf^{-1})=\sho(\gbf)^{-1}.
\end{gather}
Using~\er{nbmq} and~\er{tmtm}, one gets
\begin{multline}
\notag
\nab_{\widetilde{M}}\big(\sho(\gbf)\cdot P\cdot \sho(\gbf)^{-1}\big)=
\sho\big(\widetilde{M}^{-1}\sho(\gbf)P\sho(\gbf)^{-1}\widetilde{M}\big)=
\sho\big(\gbf M^{-1}\sho(\gbf)^{-1}\sho(\gbf)
P\sho(\gbf)^{-1}\sho(\gbf)M\gbf^{-1}\big)=\\
=\sho(\gbf M^{-1}PM\gbf^{-1})=
\sho(\gbf)\sho(M^{-1}PM)\sho(\gbf^{-1})=
\sho(\gbf)\nab_M(P)\sho(\gbf)^{-1}.
\end{multline}
Thus we have
\begin{gather}
\lb{nbtm}
\nab_{\widetilde{M}}\big(\sho(\gbf)\cdot P\cdot \sho(\gbf)^{-1}\big)=
\sho(\gbf)\cdot \nab_M(P)\cdot \sho(\gbf)^{-1}.
\end{gather}
Similarly, using~\er{nbm1q} and~\er{tmtm}, one obtains
\begin{gather}
\lb{nbtmm1}
\nab_{\widetilde{M}}^{-1}\big(\sho(\gbf)\cdot P\cdot \sho(\gbf)^{-1}\big)=
\sho(\gbf)\cdot \nab_M^{-1}(P)\cdot \sho(\gbf)^{-1}.
\end{gather}
Using~\er{nbtm} and~\er{nbtmm1}, one can prove~\er{tmm} by induction on~${\km}$.
\end{proof}

\begin{theorem}
\label{th1}
For any invertible $\sm\times\sm$ matrix-functions
$M=M(\nt,\ud,\la)$ and $G=G(\nt,\ud,\la)$, we have
\begin{gather}
\label{gmg}
\euop{u^j}\big(\sho(G)\cdot M\cdot G^{-1}\big)=\sho(G)\cdot\euop{u^j}(M)\cdot\sho(G)^{-1}
\qquad\quad\forall\,j=1,\ldots,\ndu.
\end{gather}
\end{theorem}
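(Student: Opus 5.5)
We compute $\euop{u^j}(\widetilde{M})$ for $\widetilde{M}=\sho(G)\cdot M\cdot G^{-1}$ term by term from~\er{foreo}, and then use Lemma~\ref{lem1} to move the conjugation by~$\sho(G)$ outside every operator $\nab_{\widetilde{M}}^{-\ik}$.

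First, for each $\ik\in\zz$ we differentiate:
\begin{gather*}
\dd{u^j_\ik}(\widetilde{M})\cdot\widetilde{M}^{-1}
=\dd{u^j_\ik}\big(\sho(G)\big)\sho(G)^{-1}
+\sho(G)\,\dd{u^j_\ik}(M)M^{-1}\,\sho(G)^{-1}
-\sho(G)M G^{-1}\dd{u^j_\ik}(G)\,G^{-1}\,\cdot G M^{-1}\sho(G)^{-1}.
\end{gather*}
Here we use $\dd{u^j_\ik}(G^{-1})=-G^{-1}\dd{u^j_\ik}(G)G^{-1}$. The third term simplifies to $-\sho(G)\,M\big(G^{-1}\dd{u^j_\ik}(G)\big)M^{-1}\,\sho(G)^{-1}$. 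By~\er{csuf} we also have $\dd{u^j_\ik}(\sho(G))=\sho\big(\dd{u^j_{\ik-1}}(G)\big)$.

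Introduce $A_\ik:=\dd{u^j_\ik}(G)\cdot G^{-1}$. Writing everything in the form $\sho(G)\,(\cdot)\,\sho(G)^{-1}$, we get
\begin{gather*}
\dd{u^j_\ik}\big(\sho(G)\big)\sho(G)^{-1}=\sho(A_{\ik-1})=\sho(G)\,\big[\sho(G)^{-1}\sho(A_{\ik-1})\sho(G)\big]\,\sho(G)^{-1},
\end{gather*}
and
\begin{gather*}
M G^{-1}\dd{u^j_\ik}(G)M^{-1}=M G^{-1}A_\ik G M^{-1}.
\end{gather*}
Define $B_\ik:=G^{-1}A_\ik G$. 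Since $\sho(G^{-1}A_{\ik-1}G)=\sho(B_{\ik-1})$, the first bracket equals $\sho(B_{\ik-1})$, which is not yet of the $\nab_M$ shape. To fix this, we compare with $\nab_M^{-1}$. By~\er{nbm1q}, $\nab_M^{-1}(\sho(B_{\ik-1}))=M B_{\ik-1}M^{-1}$. Hence the third term, in brackets, is exactly $-\nab_M^{-1}(\sho(B_\ik))$. So
\begin{gather*}
\dd{u^j_\ik}(\widetilde{M})\widetilde{M}^{-1}=\sho(G)\Big[\dd{u^j_\ik}(M)M^{-1}+\sho(B_{\ik-1})-\nab_M^{-1}\big(\sho(B_\ik)\big)\Big]\sho(G)^{-1}.
\end{gather*}

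Next, apply $\nab_{\widetilde{M}}^{-\ik}$ and use~\er{tmm} with $P$ equal to the bracket. This gives
\begin{gather*}
\nab_{\widetilde{M}}^{-\ik}\big(\dd{u^j_\ik}(\widetilde{M})\widetilde{M}^{-1}\big)
=\sho(G)\Big[\nab_M^{-\ik}\big(\dd{u^j_\ik}(M)M^{-1}\big)+\nab_M^{-\ik}\big(\sho(B_{\ik-1})\big)-\nab_M^{-\ik-1}\big(\sho(B_\ik)\big)\Big]\sho(G)^{-1}.
\end{gather*}
Summing over $\ik\in\zz$, the last two terms telescope. With $c_\ik:=\nab_M^{-\ik-1}(\sho(B_\ik))$, the two terms are $c_{\ik-1}-c_\ik$. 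All sums are finite, since $M$ and $G$ depend on finitely many $u_\ik$, so $B_\ik=0$ for all but finitely many~$\ik$. Therefore the telescoping sum vanishes, which yields~\er{gmg}.

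The main obstacle is the bookkeeping in the first step: recognizing the term coming from $\dd{u^j_\ik}(G^{-1})$ as $\nab_M^{-1}$ applied to a shifted term. That shifted term has the same shape as the term coming from $\dd{u^j_{\ik}}(\sho(G))$, with index offset by one, and this is what makes the telescoping possible. A second point to check is that the sum in~\er{foreo} is finite, so the telescoping rearrangement is legitimate.
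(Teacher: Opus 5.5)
Your proof is correct and takes the route the paper indicates. The paper lists exactly the relations you use: the derivative of $G^{-1}$, $\dd{u^j_\ik}(\sho(G))=\sho\big(\dd{u^j_{\ik-1}}(G)\big)$, and the identification of $MG^{-1}\dd{u^j_\ik}(G)M^{-1}$ with $\nab_M^{-1}\big(\sho(G^{-1}\dd{u^j_\ik}(G))\big)$, together with Lemma~\ref{lem1}; your telescoping of $\nab_M^{-\ik}\big(\sho(B_{\ik-1})\big)-\nab_M^{-\ik-1}\big(\sho(B_\ik)\big)$, whose terms vanish for all but finitely many $\ik$, is the computation the paper defers to its preprint.
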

\begin{proof}
Set $\widetilde{M}=\sho(G)MG^{-1}$.
To prove~\er{gmg}, we need to show that 
$\euop{u^j}\big(\widetilde{M}\big)=\sho(G)\cdot\euop{u^j}(M)\cdot\sho(G)^{-1}$.

Using the relations
\begin{gather*}
\widetilde{M}^{-1}=G M^{-1}\sho(G)^{-1},\qquad 
\dd{u^j_\ik}(G^{-1})=-G^{-1}\dd{u^j_\ik}(G)G^{-1},\qquad
\dd{u^j_\ik}\big(\sho(G)\big)=\sho\big(\dd{u^j_{\ik-1}}(G)\big),\\
\sho\big(\dd{u^j_{\ik-1}}(G)\big)=\sho(G)\sho\big(G^{-1}\dd{u^j_{\ik-1}}(G)\big),\qquad
MG^{-1}\dd{u^j_\ik}(G)M^{-1}=\nab_{M}^{-1}\big(\sho\big(G^{-1}\dd{u^j_\ik}(G)\big)\big)
\end{gather*}
and Lemma~\ref{lem1}, one can prove 
$\euop{u^j}\big(\widetilde{M}\big)=\sho(G)\cdot\euop{u^j}(M)\cdot\sho(G)^{-1}$
by straightforward (but rather lengthy) computations,
which are presented in the preprint~\cite{gaprep2026}.
\end{proof}


\begin{lemma}
\label{leop0}
Let $\ip\in\zp$. 
Consider an invertible $\sm\times\sm$ matrix-function $M=M(\nt,u_0,\ldots,u_\ip,\la)$. 
Suppose that 
\begin{gather}
\lb{eop0}
\euop{u^j}(M)=0\qquad\quad\forall\,j=1,\ldots,\ndu.
\end{gather}
Then there is a gauge transformation $\gbf=\gbf(\nt,u_0,\ldots,u_{\ip-1},\la)$
such that the matrix
\begin{gather}
\lb{hatmp}
\breve{M}=\sho(\gbf)\cdot M(\nt,u_0,\ldots,u_\ip,\la)\cdot\gbf^{-1}
\end{gather}
obeys 
\begin{gather}
\lb{dujhm}
\dd{u^j_\ik}(\breve{M})=0\qquad\quad\forall\,\ik\in\zz,\qquad\forall\,j=1,\ldots,\ndu.
\end{gather}
\end{lemma}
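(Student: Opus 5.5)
Our plan is to argue by induction on $\ip\in\zp$, removing the dependence on the highest shift $u_\ip$ one step at a time with Theorem~\ref{thmlpk1}, and using Theorem~\ref{th1} to keep condition~\er{eop0} valid after each gauge transformation.

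\emph{Base case $\ip=0$.} Here $M=M(\nt,u_0,\la)$, and in~\er{foreo} only $\ik=0$ contributes. So $\euop{u^j}(M)=\dd{u^j_0}(M)\cdot M^{-1}$, and~\er{eop0} gives $\dd{u^j_0}(M)=0$ for all $j$. Hence $M$ depends only on $\nt,\la$, and we may take $\gbf=\id$, which trivially depends on $u_0,\ldots,u_{-1}$, i.e.\ on nothing.

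\emph{Inductive step $\ip\ge1$.} First we verify~\er{pdm} for $M$. In $\euop{u^j}(M)=\sum_{\ik=0}^{\ip}\nab_M^{-\ik}\big(\dd{u^j_\ik}(M)M^{-1}\big)$ we track which shifts each term can depend on. The factors $M$ and $M^{-1}$ depend on $u_0,\ldots,u_\ip$, and $\sho^{-1}$ lowers indices by one. Hence $\nab_M^{-\ik}(X)$, for $X$ depending on $u_0,\ldots,u_\ip$, depends on $u_{-\ik},\ldots,u_\ip$. The variable $u_{-\ip}$ therefore appears only in the term $\ik=\ip$, namely
\begin{gather*}
\nab_M^{-\ip}\big(Y\big)=M\,\sho^{-1}(M)\cdots\sho^{-\ip+1}(M)\cdot\sho^{-\ip}(Y)\cdot\sho^{-\ip+1}(M)^{-1}\cdots M^{-1},\qquad Y=\dd{u^j_\ip}(M)M^{-1}.
\end{gather*}
The outer conjugating factors $\sho^{-s}(M)$ with $s<\ip$ do not involve $u_{-\ip}$. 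Differentiating $\euop{u^j}(M)=0$ with respect to $u^i_{-\ip}$ gives $\sho^{-\ip}\big(\dd{u^i_0}(Y)\big)=0$ after conjugation, which is exactly~\er{pdm}.

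Theorem~\ref{thmlpk1} then supplies the gauge transformation $\gbf_1=\gbf_1(\nt,u_0,\ldots,u_{\ip-1},\la)$ of~\er{guum} for which $M_1:=\sho(\gbf_1)M\gbf_1^{-1}$ has the form $M_1(\nt,u_0,\ldots,u_{\ip-1},\la)$. By Theorem~\ref{th1}, $\euop{u^j}(M_1)=\sho(\gbf_1)\euop{u^j}(M)\sho(\gbf_1)^{-1}=0$. The induction hypothesis applied to $M_1$ gives $\gbf_2=\gbf_2(\nt,u_0,\ldots,u_{\ip-2},\la)$ such that $\sho(\gbf_2)M_1\gbf_2^{-1}$ satisfies~\er{dujhm}. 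Setting $\gbf=\gbf_2\gbf_1$, we have $\sho(\gbf)M\gbf^{-1}=\sho(\gbf_2)M_1\gbf_2^{-1}$, and $\gbf$ depends only on $\nt,u_0,\ldots,u_{\ip-1},\la$, as required.

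The main obstacle is the dependence bookkeeping that shows $u_{-\ip}$ enters only through the innermost factor of the $\ik=\ip$ term. For this we need the index-range claim for $\nab_M^{-\ik}$, proved by a short induction on $\ik$. We also use that conjugation by invertible matrices and the shift $\sho^{-\ip}$ are injective, which is what lets us pass from the derivative of the whole term being zero to $\dd{u^i_0}(Y)=0$.
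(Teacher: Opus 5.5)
Your proposal is correct and follows essentially the same route as the paper: induction on $\ip$, deriving \eqref{pdm} from \eqref{eop0}, then applying Theorem~\ref{thmlpk1}, Theorem~\ref{th1}, the induction hypothesis, and the composed gauge transformation $\gbf_2\gbf_1$. The only difference is in how \eqref{pdm} is obtained, and it is cosmetic. You differentiate \eqref{eop0} with respect to $u^i_{-\ip}$, which enters only the top term $\nab_M^{-\ip}\big(\dd{u^j_\ip}(M)M^{-1}\big)$. The paper instead applies $\nab_M^{\ip}$ to isolate $\dd{u^j_\ip}(M)M^{-1}$ as $-\sho(Y)$, with $Y$ independent of all $u_a$, $a<0$. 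Both rest on the same dependence bookkeeping, seen from opposite ends.
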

\begin{proof}
We prove this by induction on~$\ip\in\zp$. First, consider the case $\ip=0$, where $M=M(\nt,u_0,\la)$.
According to~\er{foreo}, for $M=M(\nt,u_0,\la)$ we have $\euop{u^j}(M)=\dd{u^j_0}(M)\cdot M^{-1}$. 
Then \er{eop0} implies that $M$ is of the form $M=M(\nt,\la)$.
We take $\gbf=\id$, where $\id$ is the identity matrix of size~$\sm\times\sm$. 
Then $\breve{M}=M(\nt,\la)$ satisfies~\er{hatmp},~\er{dujhm}.

To make the induction step, we consider a number $k\in\zp$ such that 
the statement of Lemma~\ref{leop0} is valid for~$\ip=k$.
We need to prove the statement in the case $\ip=k+1$, where $M=M(\nt,u_0,\ldots,u_{k+1},\la)$.
According to~\er{foreo}, for such~$M$ equation~\er{eop0} reads
$\sum_{\ik=0,\ldots,k+1}\nab_M^{-\ik}\big(\dd{u^j_\ik}(M)M^{-1}\big)=0$, which yields
\begin{gather}
\lb{k1}
\nab_M^{-(k+1)}\big(\dd{u^j_{k+1}}(M)M^{-1}\big)=
-\sum_{\ik=0,\ldots,k}\nab_M^{-\ik}\big(\dd{u^j_\ik}(M)M^{-1}\big)
\qquad\quad\forall\,j=1,\ldots,\ndu.
\end{gather}
Applying the operator~$\nab_M^{k+1}$ to equation~\er{k1} and using~\er{nbmq}, we derive
\begin{multline}
\lb{k1new}
\dd{u^j_{k+1}}(M)\cdot M^{-1}=
-\sum_{\ik=0,\ldots,k}\nab_M^{k+1-\ik}\big(\dd{u^j_\ik}(M)M^{-1}\big)=
-\nab_M\Big(\sum_{\ik=0,\ldots,k}\nab_M^{k-\ik}\big(\dd{u^j_\ik}(M)M^{-1}\big)\Big)=\\
=-\sho\Big(M^{-1}\Big(\sum_{\ik=0,\ldots,k}\nab_M^{k-\ik}\big(\dd{u^j_\ik}(M)M^{-1}\big)\Big)M\Big)
\qquad\quad\forall\,j=1,\ldots,\ndu.
\end{multline}
Let $Y=M^{-1}\Big(\sum_{\ik=0,\ldots,k}\nab_M^{k-\ik}\big(\dd{u^j_\ik}(M)M^{-1}\big)\Big)M$.
The formulas $M=M(\nt,u_0,\ldots,u_{k+1},\la)$ and~\er{nbmq}
imply that $Y$ does not depend on~$u^i_a$ for any $a<0$ and $i=1,\ldots,\ndu$. 
Since the right-hand side of~\er{k1new} is equal to~$-\sho(Y)$, 
it does not depend on~$u^i_0$, $\,i=1,\ldots,\ndu$.
Therefore, from~\er{k1new} we get 
\begin{gather}
\lb{pdmnew}
\forall\, i,j=1,\ldots,\ndu\qquad\quad
\dd{u^i_0}\Big(\dd{u^j_{k+1}}\big(M(\nt,u_0,\ldots,u_{k+1},\la)\big)\cdot M(\nt,u_0,\ldots,u_{k+1},\la)^{-1}\Big)=0.
\end{gather}
By Theorem~\ref{thmlpk1} in the case $\ip=k+1$, 
from property~\er{pdmnew} it follows that there is a gauge transformation $G=G(\nt,u_0,\ldots,u_{k},\la)$
such that the matrix
\begin{gather}
\lb{lhm}
\widehat{M}=\sho(G)\cdot M(\nt,u_0,\ldots,u_{k+1},\la)\cdot G^{-1}
\end{gather}
is of the form $\widehat{M}=\widehat{M}(\nt,u_0,\ldots,u_k,\la)$.
Using~\er{gmg} and~\er{eop0}, one obtains
\begin{gather}
\label{gmgnew}
\euop{u^j}\big(\widehat{M}\big)=\euop{u^j}\big(\sho(G)\cdot M\cdot G^{-1}\big)
=\sho(G)\cdot\euop{u^j}(M)\cdot\sho(G)^{-1}=0
\qquad\quad\forall\,j=1,\ldots,\ndu.
\end{gather}
Since, by the induction assumption, the statement of Lemma~\ref{leop0} is valid for~$\ip=k$, 
equation~\er{gmgnew} for $\widehat{M}=\widehat{M}(\nt,u_0,\ldots,u_k,\la)$ implies that 
there is a gauge transformation $\gbb=\gbb(\nt,u_0,\ldots,u_{k-1},\la)$ such that the matrix
\begin{gather}
\lb{hatmp2}
\breve{M}=\sho(\gbb)\cdot\widehat{M}(\nt,u_0,\ldots,u_k,\la)\cdot\gbb^{-1}
\end{gather}
obeys~\er{dujhm}. Now we set 
\begin{gather}
\lb{ggg}
\gbf(\nt,u_0,\ldots,u_k,\la)=\gbb(\nt,u_0,\ldots,u_{k-1},\la)\cdot G(\nt,u_0,\ldots,u_{k},\la).
\end{gather}
Formulas~\er{lhm},~\er{hatmp2},~\er{ggg} imply that $\breve{M}$ satisfies~\er{hatmp} 
with~$\ip=k+1$. Furthermore, as said above, $\breve{M}$ obeys~\er{dujhm}.
Therefore, we have proved the statement of Lemma~\ref{leop0} in the case~$\ip=k+1$.
\end{proof}

\begin{theorem}
\label{thtriv}
A MLR $\big(\mm(n,\ud,\la),\,\lt(n,\ud,\la)\big)$ 
is gauge equivalent to a trivial MLR if and only if\/ 
$\mm=\mm(n,\ud,\la)$ satisfies
\begin{gather}
\lb{eomm0}
\euop{u^j}(\mm)=0\qquad\quad\forall\,j=1,\ldots,\ndu,
\end{gather}
where $\euop{u^j}(\mm)$ is determined by formula~\er{foreo}.
\end{theorem}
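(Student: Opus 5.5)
We prove the two implications separately. Both rest on the conjugation rule of Theorem~\ref{th1}, together with Lemma~\ref{lgs}, Lemma~\ref{leop0} and the relabeling of Remark~\ref{rrelab}.

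\emph{Necessity.} Suppose $(\mm,\lt)$ is gauge equivalent to a trivial MLR $(\mm_0,\lt_0)$, where $\mm_0=\mm_0(\nt,\la)$. Then $\mm=\sho(\gbf)\cdot\mm_0\cdot\gbf^{-1}$ for some invertible $\gbf$. Gauge equivalence is symmetric: the inverse transformation $\gbf^{-1}$ undoes $\gbf$. Since $\mm_0$ does not depend on any $u_\ik$, every term in~\er{foreo} vanishes, so $\euop{u^j}(\mm_0)=0$. Theorem~\ref{th1} then gives
\begin{gather*}
\euop{u^j}(\mm)=\sho(\gbf)\cdot\euop{u^j}(\mm_0)\cdot\sho(\gbf)^{-1}=0\qquad\forall\,j=1,\ldots,\ndu,
\end{gather*}
which is~\er{eomm0}.

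\emph{Sufficiency.} Suppose~\er{eomm0} holds. Since $\mm$ depends on finitely many dynamical variables, it is of class $(\ial,\ibe)$ for some integers $\ial\le\ibe$. By Lemma~\ref{lgs} with $\imm=-\ial$, the MLR $(\mm,\lt)$ is gauge equivalent to an MLR of class $(0,\ibe-\ial)$ whose $\sho$-part is $\mm':=\sho^{-\ial}(\mm)$; denote this gauge transformation by $\gbf_1$. Hence $\mm'=\sho(\gbf_1)\cdot\mm\cdot\gbf_1^{-1}$, and Theorem~\ref{th1} gives $\euop{u^j}(\mm')=\sho(\gbf_1)\cdot\euop{u^j}(\mm)\cdot\sho(\gbf_1)^{-1}=0$ for all~$j$.

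Applying Lemma~\ref{leop0} with $\ip=\ibe-\ial$ to $\mm'$ yields a gauge transformation $\gbf_2$ such that $\breve{\mm}=\sho(\gbf_2)\cdot\mm'\cdot\gbf_2^{-1}$ satisfies $\dd{u^j_\ik}(\breve{\mm})=0$ for all $\ik,j$. Thus $\breve{\mm}=\breve{\mm}(\nt,\la)$.

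The composite $\gbf_2\gbf_1$ is a gauge transformation. Since gauge transformations compose, $(\mm,\lt)$ is gauge equivalent to the MLR with $\sho$-part $\breve{\mm}$ and the corresponding $\lt$-part given by~\er{ggtmtu}. By Definition~\ref{dtrmlp}, this MLR is trivial.

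\emph{Main obstacle.} The only point needing care is that we must work up to relabeling only via gauge transformations, not via a change of variables, because a relabeling is not a gauge transformation. This is why we use Lemma~\ref{lgs} rather than Remark~\ref{rrelab}. Lemma~\ref{lgs} gives a genuine gauge equivalence, and Theorem~\ref{th1} then transports condition~\eqref{eomm0} to the shifted $\sho$-part $\sho^{-\ial}(\mm)$.
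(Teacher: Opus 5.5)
Your proposal is correct and follows essentially the same route as the paper. Necessity uses the conjugation rule of Theorem~\ref{th1} applied to a trivial $\sho$-part; sufficiency shifts $\mm$ to class $(0,\ip)$ via Lemma~\ref{lgs}, transports~\eqref{eomm0} with Theorem~\ref{th1}, applies Lemma~\ref{leop0}, and composes the gauge transformations. The only cosmetic difference is in the necessity part: you apply Theorem~\ref{th1} directly with $G=\gbf$ after invoking the symmetry of gauge equivalence, whereas the paper writes $\mm=\sho(\gbf^{-1})\cdot\widehat{\mm}\cdot\gbf$ and applies it with $G=\gbf^{-1}$.
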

\begin{proof}
Suppose that a MLR $\big(\mm(n,\ud,\la),\,\lt(n,\ud,\la)\big)$
is gauge equivalent to a trivial MLR. This means that there is a 
gauge transformation $\gbf=\gbf(\nt,\ud,\la)$ such that the matrix
\begin{gather}
\lb{hatmp3}
\widehat{\mm}=\sho(\gbf)\cdot\mm(n,\ud,\la)\cdot\gbf^{-1}
\end{gather}
obeys 
\begin{gather}
\lb{duhmm}
\dd{u^j_\ik}(\widehat{\mm})=0\qquad\quad\forall\,\ik\in\zz,\qquad\forall\,j=1,\ldots,\ndu.
\end{gather}
According to~\er{foreo}, property~\er{duhmm} yields
\begin{gather}
\lb{eohm0}
\euop{u^j}(\widehat{\mm})=0\qquad\quad\forall\,j=1,\ldots,\ndu.
\end{gather}
From~\er{hatmp3} one derives 
\begin{gather}
\lb{mmghm}
\mm=\sho(\gbf)^{-1}\cdot\widehat{\mm}\cdot\gbf=
\sho\big(\gbf^{-1}\big)\cdot\widehat{\mm}\cdot\gbf.
\end{gather}
Using~\er{mmghm},~\er{eohm0} and property~\er{gmg} for $M=\widehat{\mm}$ and $G=\gbf^{-1}$, we get
\begin{gather}
\notag
\euop{u^j}(\mm)=\euop{u^j}\big(\sho\big(\gbf^{-1}\big)\cdot\widehat{\mm}\cdot\gbf\big)=
\sho\big(\gbf^{-1}\big)\cdot\euop{u^j}(\widehat{\mm})\cdot\sho(\gbf)=0\qquad\quad\forall\,j=1,\ldots,\ndu,
\end{gather}
thus we obtain~\er{eomm0}.

Now, conversely, consider a MLR $(\mm,\lt)$
with $\sho$-part $\mm=\mm(n,\ud,\la)$ satisfying~\er{eomm0}.
We need to prove that this MLR is gauge equivalent to a trivial MLR. 

The rules~\er{csuf} imply that there are $\imm\in\zz$ and $\ip\in\zp$ such that
the matrix-function $\widetilde{\mm}=\sho^\imm(\mm)$ may depend only on $n,u_0,\ldots,u_\ip,\la$.
By Lemma~\ref{lgs}, the MLR $(\mm,\lt)$ 
is gauge equivalent to some MLR $\big(\widetilde{\mm},\widetilde{\lt}\big)$ with $\sho$-part equal to 
$\sho^\imm(\mm)=\widetilde{\mm}(n,u_0,\ldots,u_\ip,\la)$.
Therefore, there is a gauge transformation $G=G(n,\ud,\la)$ 
such that $\widetilde{\mm}=\sho(G)\cdot\mm\cdot G^{-1}$.
Using~\er{gmg} and~\er{eomm0}, we get
\begin{gather}
\label{eotm}
\euop{u^j}\big(\widetilde{\mm}\big)=
\euop{u^j}\big(\sho(G)\cdot\mm\cdot G^{-1}\big)=\sho(G)\cdot\euop{u^j}(\mm)\cdot\sho(G)^{-1}=0
\qquad\quad\forall\,j=1,\ldots,\ndu.
\end{gather}
By Lemma~\ref{leop0} for $\widetilde{\mm}=\widetilde{\mm}(n,u_0,\ldots,u_\ip,\la)$, 
property~\er{eotm} implies that there is a gauge transformation $\gbf=\gbf(u_0,\ldots,u_{\ip-1},\la)$
such that the matrix $\breve{M}=\sho(\gbf)\cdot\widetilde{\mm}\cdot\gbf^{-1}$ obeys~\er{dujhm}.

The relations $\breve{M}=\sho(\gbf)\cdot\widetilde{\mm}\cdot\gbf^{-1}$ and 
$\widetilde{\mm}=\sho(G)\cdot\mm\cdot G^{-1}$ yield 
$\breve{M}=\sho(\gbf G)\cdot\mm\cdot(\gbf G)^{-1}$. 
Therefore, applying the gauge transformation $\gbf G$ to the MLR~$(\mm,\lt)$,
we get a MLR with $\sho$-part~$\breve{M}$ obeying~\er{dujhm}.
Hence $(\mm,\lt)$ is gauge equivalent to a trivial MLR.
\end{proof}

\section{Miura-type transformations and matrix Lax representations}
\lb{smt}

The following lemma is proved easily by a direct calculation.
\begin{lemma}
\label{luv}
Suppose that matrices
\begin{gather}
\lb{mltu}
\mm=\mm(\nt,\ud,\la),\qquad\quad\lt=\lt(\nt,\ud,\la)
\end{gather}
form a MLR for equation~\eqref{sdde}.
Let \er{uvf} be a MT from equation~\eqref{vdde} to equation~\eqref{sdde}.
As explained in Definition~\textup{\ref{defmtt}}, in formula~\er{uvf} one has
an $\ndu$-component vector-function $\mf=(\mf^1,\dots,\mf^{\ndu})$ 
which  may depend on~$n$ and on a finite number of the dynamical variables 
$\mv_\ik=(\mv^1_\ik,\dots,\mv^\ndu_\ik)$, $\,\ik\in\zz$.

Motivated by formula~\er{uvf}, we substitute $u_\ik=\sho^\ik(\mf)$, $\,\ik\in\zz$, 
in~\er{mltu} and obtain the matrices
\begin{gather}
\lb{mlmv}
\mm=\mm\big(\nt,\sho^\ik(\mf),\ldots,\la\big),
\qquad\quad\lt=\lt\big(\nt,\sho^\ik(\mf),\ldots,\la\big).
\end{gather}
Then \er{mlmv} is a MLR for equation~\eqref{vdde}.
\end{lemma}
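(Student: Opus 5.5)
The plan is to verify relation~\er{lr} for the substituted matrices~\er{mlmv} directly, with the total derivative operator~$\tiltdt$ of equation~\eqref{vdde} in place of~$\tdt$. Invertibility is the first point to settle. $\det\mm$ is a nonzero rational function of the variables~\er{uldiu}. After the substitution $u_\ik=\sho^\ik(\mf)$ we obtain $\det\big(\mm(\nt,\sho^\ik(\mf),\ldots,\la)\big)$. I will assume, as is implicit in the statement, that this is well defined and not identically zero (in the sense of Remark~\ref{rratf}); I will note that this is a mild genericity assumption on~$\mf$.

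The key step is a chain-rule identity. For any function $h=h(\nt,\ud,\la)$, denote by $h^{\mf}$ the result of substituting $u_\ik=\sho^\ik(\mf)$. I will show
\begin{gather*}
\tiltdt\big(h^{\mf}\big)=\big(\tdt(h)\big)^{\mf},\qquad\quad \sho\big(h^{\mf}\big)=\big(\sho(h)\big)^{\mf}.
\end{gather*}
The second identity is immediate from the rules~\er{csuf}: $\sho$ replaces $n$ by $n+1$ and $\sho^\ik(\mf)$ by $\sho^{\ik+1}(\mf)$. For the first, the chain rule gives
\[
\tiltdt(h^{\mf})=\sum_{\ik,\xi}\Big(\frac{\pd h}{\pd u^\xi_\ik}\Big)^{\mf}\,\tiltdt\big(\sho^\ik(\mf^\xi)\big).
\]
Since $\tiltdt$ commutes with~$\sho$, we have $\tiltdt(\sho^\ik(\mf^\xi))=\sho^\ik(\tiltdt(\mf^\xi))$. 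By~\er{dtmfi} this equals $\sho^\ik\big(\ff^\xi(\nt,\sho^\ua(\mf),\ldots,\sho^\ub(\mf))\big)=\big(\sho^\ik(\ff^\xi)\big)^{\mf}$. Comparing with~\er{dtfu} gives the identity.

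Finally, the substitution $h\mapsto h^{\mf}$ is a ring homomorphism applied entrywise, so it respects matrix products. We apply it to~\er{lr}:
\[
\tiltdt(\mm^{\mf})=\big(\tdt(\mm)\big)^{\mf}=\big(\sho(\lt)\mm-\mm\lt\big)^{\mf}=\sho(\lt^{\mf})\mm^{\mf}-\mm^{\mf}\lt^{\mf}.
\]
This is relation~\er{lr} for~\er{mlmv} with respect to equation~\eqref{vdde}.

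The only delicate point is the first identity. It requires the commutation $\tiltdt\sho=\sho\tiltdt$, noted after~\er{dtfu}, together with the fact that $\sho$ commutes with the substitution; the rest is bookkeeping. I expect no genuine obstacle beyond the well-definedness and invertibility caveat discussed above.
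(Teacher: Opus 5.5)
Your proposal is correct and is essentially the direct calculation the paper has in mind; the paper gives no details beyond saying the lemma ``is proved easily by a direct calculation.'' The substitution $u_\ik=\sho^\ik(\mf)$ intertwines $\sho$ with $\sho$ and, via \eqref{dtmfi} and $\tiltdt\sho=\sho\tiltdt$, intertwines $\tdt$ with $\tiltdt$, and as an entrywise ring homomorphism it carries \eqref{lr} over to \eqref{mlmv}. Your caveat that the substituted $\det\mm$ must be well defined and not identically zero is a genuine hypothesis the lemma leaves implicit: for example, $u=0$ is an MT for $\pd_t(u)=u$, and it sends the MLR $\big(\mathrm{diag}(u_0,1),\,\mathrm{diag}(n,0)\big)$ to a singular matrix, so you are right to state it explicitly.
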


\subsection{Non-autonomous examples}
\lb{snaex}

The following $2$-component non-autonomous equation
was introduced in~\cite[page~7]{KMX2015} in a study of Darboux transformations
\begin{gather}
\label{kmx}
\left\{
\begin{aligned}
\pd_t(\pe^1)&=2\big(\pe^{2}_{1}(\pe^{1}_{0})^2+\kmx(n)\pe^{1}_{0}-\pe^{1}_{1}\big),\\ 
\pd_t(\pe^2)&=2\big(\pe^{2}_{-1}-\pe^{1}_{-1}(\pe^{2}_{0})^2-\kmx(n-1)\pe^{2}_{0}\big).
\end{aligned}\right.
\end{gather}
Here $\kmx(n)$ is an arbitrary function.
(This equation is given in~\cite[page~7]{KMX2015} in different notation,
using functions $p(n,x)$, $q(n,x)$, instead of $\pe^1(n,t)$, $\pe^2(n,t)$.)

The results of~\cite{KMX2015} imply 
that the following matrices form a MLR for~\er{kmx}
\begin{gather}
\lb{mlrbc}
\widetilde{\mm}=
\begin{pmatrix}
 {\la}+\kmx(n)+\pe^{1}_{0} \pe^{2}_{1} & \pe^{1}_{0} \\
 \pe^{2}_{1} & 1 \\
\end{pmatrix},
\qquad\quad\widetilde{\lt}=
\begin{pmatrix}
 -{\la} & -2 \pe^{1}_{0} \\
 -2 \pe^{2}_{0} & {\la} \\
\end{pmatrix}.
\end{gather}
The MLR~\er{mlrbc} is not $u$-tame (in the sense of Definition~\ref{dtame}).
In order to obtain a $u$-tame MLR, we relabel 
\begin{gather}
\lb{rp}
\pe^1_\ik\,\mapsto\,\pe^1_\ik,\qquad\quad \pe^2_\ik\,\mapsto\,\pe^2_{\ik-1}\qquad\quad\forall\,\ik\in\zz.
\end{gather}
This means that we make the following invertible change of variables 
\begin{gather}
\notag
\pe^1(n,t)\,\mapsto\,\pe^1(n,t),\qquad\pe^2(n,t)\,\mapsto\,\pe^2(n\!-\!1,t).
\end{gather}
Applying the (invertible) change~\er{rp} to \er{kmx}, \er{mlrbc}, we get
\begin{gather}
\label{chkmx}
\left\{
\begin{aligned}
\pd_t(\pe^1)&=2\big(\pe^{2}_{0}(\pe^{1}_{0})^2+\kmx(n)\pe^{1}_{0}-\pe^{1}_{1}\big),\\ 
\pd_t(\pe^2_{-1})&=2\big(\pe^{2}_{-2}-\pe^{1}_{-1}(\pe^{2}_{-1})^2-\kmx(n-1)\pe^{2}_{-1}\big),
\end{aligned}\right.\\
\lb{nmlrbc}
\breve{\mm}=
\begin{pmatrix}
 {\la}+\kmx(n)+\pe^{1}_{0} \pe^{2}_{0} & \pe^{1}_{0} \\
 \pe^{2}_{0} & 1 \\
\end{pmatrix},
\qquad\quad
\breve{\lt}=
\begin{pmatrix}
 -{\la} & -2 \pe^{1}_{0} \\
 -2 \pe^{2}_{-1} & {\la} \\
\end{pmatrix}.
\end{gather}
Applying the (invertible) operator~$\sho$ to the second component in~\er{chkmx}, one obtains
\begin{gather}
\label{newkmx}
\left\{
\begin{aligned}
\pd_t(\pe^1)&=2\big(\pe^{2}_{0}(\pe^{1}_{0})^2+\kmx(n)\pe^{1}_{0}-\pe^{1}_{1}\big),\\ 
\pd_t(\pe^2)&=2\big(\pe^{2}_{-1}-\pe^{1}_0(\pe^{2}_0)^2-\kmx(n)\pe^{2}_0\big).
\end{aligned}\right.
\end{gather}
Since \er{mlrbc} is a MLR for equation~\er{kmx},
the described way of deriving~\er{nmlrbc},~\er{newkmx}
from~\er{mlrbc},~\er{kmx} implies that \er{nmlrbc} is a MLR for equation~\er{newkmx}.
Furthermore, equation~\er{newkmx} is equivalent to equation~\er{kmx},
as \er{newkmx} is obtained from~\er{kmx} by means of invertible operations, as described above.
The MLR~\er{nmlrbc} is $u$-tame, and this will allow us to construct a MT, as shown below.

Fix a constant $\mc\in\mathbb{C}$.
In what follows, we use the matrices~\er{nmlrbc} with~$\la$ replaced by~$\mc$.
According to Definition~\ref{dmlpgt}, since the matrices~\er{nmlrbc} 
form a MLR for equation~\er{newkmx}, we can consider the auxiliary system
\begin{gather}
\lb{syspsit}
\sho(\Psi)=\begin{pmatrix}
 {\mc}+\kmx(n)+\pe^{1}_{0} \pe^{2}_{0} & \pe^{1}_{0} \\
 \pe^{2}_{0} & 1 \\
\end{pmatrix}\cdot\Psi,\qquad\quad
\pd_t(\Psi)=\begin{pmatrix}
 -{\mc} & -2 \pe^{1}_{0} \\
 -2 \pe^{2}_{-1} & {\mc} \\
\end{pmatrix}\cdot\Psi,
\end{gather}
which is compatible modulo equation~\eqref{newkmx}. 
Here $\Psi=\Psi(n,t)$ is an invertible $2\times 2$ matrix
whose elements we denote by $\uppsi^{ij}=\uppsi^{ij}(n,t)$, $\,i,j=1,2$.

Substituting $\Psi=\begin{pmatrix}
\uppsi^{11}&\uppsi^{12}\\
\uppsi^{21}&\uppsi^{22}
\end{pmatrix}$ in~\er{syspsit} and considering the first columns 
in the matrix equations~\er{syspsit}, one gets
\begin{gather}
\lb{sp11}
\sho(\uppsi^{11})=({\mc}+\kmx(n)+\pe^{1}_{0} \pe^{2}_{0})\uppsi^{11}
+\pe^{1}_{0}\uppsi^{21},\qquad\quad
\sho(\uppsi^{21})=\pe^{2}_{0}\uppsi^{11}+\uppsi^{21},\\
\lb{dp11}
\pd_t(\uppsi^{11})=-{\mc}\uppsi^{11}-2 \pe^{1}_{0}\uppsi^{21},\qquad\quad
\pd_t(\uppsi^{21})=-2 \pe^{2}_{-1}\uppsi^{11}+{\mc}\uppsi^{21}.
\end{gather}
Now we employ the standard idea to consider the function $\vv={\uppsi^{11}}/{\uppsi^{21}}$.
Using~\er{sp11},~\er{dp11}, we derive
\begin{gather}
\label{swp11}
\begin{split}
\sho(\vv)&=\frac{\sho(\uppsi^{11})}{\sho(\uppsi^{21})}=
\frac{({\mc}+\kmx(n)+\pe^{1}_{0} \pe^{2}_{0})\uppsi^{11}+\pe^{1}_{0}\uppsi^{21}}%
{\pe^{2}_{0}\uppsi^{11}+\uppsi^{21}}=\\
&=\frac{({\mc}+\kmx(n)+\pe^{1}_{0} \pe^{2}_{0})(\uppsi^{11}/\uppsi^{21})+\pe^{1}_{0}}%
{\pe^{2}_{0}(\uppsi^{11}/{\uppsi^{21}})+1}=
\frac{({\mc}+\kmx(n)+\pe^{1}_{0} \pe^{2}_{0})\vv+\pe^{1}_{0}}%
{\pe^{2}_{0}\vv+1},
\end{split}\\
\label{dwp11}
\begin{split}
\pd_t(\vv)&=\pd_t\Big(\frac{\uppsi^{11}}{\uppsi^{21}}\Big)=
\frac{\pd_t(\uppsi^{11})}{\uppsi^{21}}-\frac{\uppsi^{11}\pd_t(\uppsi^{21})}{(\uppsi^{21})^2}=
\frac{-{\mc}\uppsi^{11}-2 \pe^{1}_{0}\uppsi^{21}}{\uppsi^{21}}-
\frac{\uppsi^{11}(-2 \pe^{2}_{-1}\uppsi^{11}+{\mc}\uppsi^{21})}{(\uppsi^{21})^2}=\\
&=-{\mc}\frac{\uppsi^{11}}{\uppsi^{21}}-2 \pe^{1}_{0}
+2\pe^{2}_{-1}\Big(\frac{\uppsi^{11}}{\uppsi^{21}}\Big)^2-
{\mc}\frac{\uppsi^{11}}{\uppsi^{21}}=2\pe^{2}_{-1}(\vv)^2-2{\mc}\vv-2 \pe^{1}_{0}.
\end{split}
\end{gather}
From~\er{swp11},~\er{dwp11} one gets
\begin{gather}
\label{swuu}
\sho(\vv)=\frac{({\mc}+\kmx(n)+\pe^{1}_{0} \pe^{2}_{0})\vv+\pe^{1}_{0}}{\pe^{2}_{0}\vv+1},\\
\label{dwuu}
\pd_t(\vv)=2\pe^{2}_{-1}(\vv)^2-2{\mc}\vv-2 \pe^{1}_{0}.
\end{gather}
We have derived equations~\er{swuu},~\er{dwuu} from system~\er{syspsit}, 
which is compatible modulo equation~\eqref{newkmx}. 
The described procedure of deriving~\er{swuu},~\er{dwuu} implies that  
system~\er{swuu},~\er{dwuu} is compatible modulo equation~\eqref{newkmx} as well. 

From~\er{swuu} one obtains
\begin{gather}
\label{reseq}
\pe^1=\pe^1_0=\frac{\sho(\vv) \pe^{2}_{0} \vv+\sho(\vv)-{\mc} \vv-\kmx(n) \vv}{\vv \pe^{2}_{0}+1}.
\end{gather}
We set
\begin{gather}
\label{vwu}
\mv^1:=\vv,\qquad\qquad\mv^2:=\pe^2.
\end{gather}
From~\er{vwu} one derives $\sho(\vv)=\sho(\mv^1)=\mv^1_1$.
As usual, we have $\pe^i_0=\pe^i$, $\,\mv^i_0=\mv^i$ for $i=1,2$.
Now we substitute $\vv=\mv^1_0$, $\,\sho(\vv)=\mv^1_1$, $\,\pe^2_0=\mv^2_0$ 
in~\er{reseq} and consider the obtained equation along with $\pe^2=\mv^2_0$, which gives
\begin{gather}
\label{mtkmx}
\left\{
\begin{aligned}
\pe^1&=\frac{\mv^{1}_{1} \mv^{2}_{0} \mv^{1}_{0}+\mv^{1}_{1}-{\mc} \mv^{1}_{0}-\kmx(n) \mv^{1}_{0}}{\mv^{1}_{0} \mv^{2}_{0}+1},\\ 
\pe^2&=\mv^{2}_{0}.
\end{aligned}\right.
\end{gather}
For each $\ell\in\zz$, applying the operator~$\sho^\ell$ to~\er{mtkmx}, we get
\begin{gather}
\label{uellmt}
\left\{
\begin{aligned}
\pe^1_\ell&=\frac{\mv^{1}_{\ell+1} \mv^{2}_{\ell} \mv^{1}_{\ell}+%
\mv^{1}_{\ell+1}-{\mc} \mv^{1}_{\ell}-\kmx(n+\ell) \mv^{1}_{\ell}}{\mv^{1}_{\ell} \mv^{2}_{\ell}+1},\\ 
\pe^2_\ell&=\mv^2_\ell,\qquad\qquad\qquad\qquad\ell\in\zz.
\end{aligned}\right.
\end{gather}
Substituting~$\vv=\mv^1$ and~\er{uellmt} in~\er{dwuu}, we derive
\begin{gather}
\label{dtmv1}
\pd_t(\mv^1)=2\mv^{2}_{-1}(\mv^1_0)^2-2{\mc}\mv^1_0
-\frac{2(\mv^{1}_{1} \mv^{2}_{0} \mv^{1}_{0}+\mv^{1}_{1}-{\mc} \mv^{1}_{0}-\kmx(n) \mv^{1}_{0})}%
{\mv^{1}_{0} \mv^{2}_{0}+1}.
\end{gather}
Now we consider~\er{dtmv1} along with the equation obtained 
by means of substituting~\er{uellmt} in the second component of~\er{newkmx}.
This gives the $2$-component non-autonomous equation
\begin{gather}
\label{dtmv}
\left\{
\begin{aligned}
\pd_t(\mv^1)&=2\mv^{2}_{-1}(\mv^1_0)^2-2{\mc}\mv^1_0
-\frac{2\big(\mv^{1}_{1} \mv^{2}_{0} \mv^{1}_{0}+\mv^{1}_{1}-{\mc} \mv^{1}_{0}-\kmx(n) \mv^{1}_{0}\big)}%
{\mv^{1}_{0} \mv^{2}_{0}+1},\\ 
\pd_t(\mv^2)&=2\mv^{2}_{-1}-2\kmx(n)\mv^{2}_0-\frac{2(\mv^{2}_0)^2\big(\mv^{1}_{1} \mv^{2}_{0} \mv^{1}_{0}+\mv^{1}_{1}-{\mc} \mv^{1}_{0}-\kmx(n) \mv^{1}_{0}\big)}{\mv^{1}_{0} \mv^{2}_{0}+1}.
\end{aligned}\right.
\end{gather}
As discussed above, 
system~\er{swuu},~\er{dwuu} is compatible modulo equation~\eqref{newkmx}.
Since equations~\er{mtkmx},~\er{dtmv} 
are derived from~\er{swuu},~\er{dwuu},~\er{newkmx} (by means of introducing~\er{vwu}), 
system~\er{mtkmx},~\er{dtmv} is compatible modulo equation~\eqref{newkmx} as well.
This implies that \er{mtkmx} is a MT from~\er{dtmv} to~\er{newkmx}.
(This fact can also be checked independently by a straightforward computation.)

Thus \er{mtkmx} is a MT for equation~\er{newkmx}.
As explained above, \er{newkmx} is obtained from~\er{kmx} 
by means of an invertible transformation involving~\er{rp}.
In order to return from~\er{newkmx} to~\er{kmx} and to derive a MT for equation~\eqref{kmx},
we are going to use the relabeling
\begin{gather}
\lb{invrp}
\pe^1_\ik\,\mapsto\,\pe^1_\ik,\qquad\pe^2_\ik\,\mapsto\,\pe^2_{\ik+1}\qquad\quad\forall\,\ik\in\zz,
\end{gather}
which is the inverse of~\er{rp}. 
That is, we make the change $\pe^1(n,t)\mapsto\pe^1(n,t)$, 
$\,\pe^2(n,t)\mapsto\pe^2(n\!+\!1,t)$.

Applying the relabeling~\er{invrp} to~\er{newkmx} and~\er{mtkmx}, we obtain
\begin{gather}
\label{inpeq}
\left\{
\begin{aligned}
\pd_t(\pe^1)&=2\big(\pe^{2}_{1}(\pe^{1}_{0})^2+\kmx(n)\pe^{1}_{0}-\pe^{1}_{1}\big),\\ 
\pd_t(\pe^2_1)&=2\big(\pe^{2}_{0}-\pe^{1}_0(\pe^{2}_1)^2-\kmx(n)\pe^{2}_1\big).
\end{aligned}\right.\\
\lb{imtpe}
\left\{
\begin{aligned}
\pe^1&=\frac{\mv^{1}_{1} \mv^{2}_{0} \mv^{1}_{0}+\mv^{1}_{1}-{\mc} \mv^{1}_{0}-\kmx(n) \mv^{1}_{0}}{\mv^{1}_{0} \mv^{2}_{0}+1},\\ 
\pe^2_1&=\mv^{2}_{0}.
\end{aligned}\right.
\end{gather}
Applying the operator~$\sho^{-1}$ to the second component in~\er{inpeq} and in~\er{imtpe}, 
one gets~\er{kmx} and 
\begin{gather}
\lb{mtpeqq}
\left\{
\begin{aligned}
\pe^1&=\frac{\mv^{1}_{1} \mv^{2}_{0} \mv^{1}_{0}+\mv^{1}_{1}-{\mc} \mv^{1}_{0}-\kmx(n) \mv^{1}_{0}}{\mv^{1}_{0} \mv^{2}_{0}+1},\\ 
\pe^2&=\mv^{2}_{-1}.
\end{aligned}\right.
\end{gather}
Thus \er{mtpeqq} is a MT from~\er{dtmv} to~\er{kmx}.

In what follows, we use the dynamical variables 
$\mv_\ik=(\mv^1_\ik,\mv^2_\ik)$, $\,\ik\in\zz$.
Recall that \er{nmlrbc} is a MLR for equation~\er{newkmx},
and formulas~\er{uellmt} are obtained by applying the operator~$\sho^\ell$ to~\er{mtkmx}.
Then, applying Lemma~\ref{luv} to the MLR~\er{nmlrbc} and 
to the MT~\er{mtkmx} from~\er{dtmv} to~\er{newkmx},
we substitute~\er{uellmt} in~\er{nmlrbc} and obtain the following MLR for equation~\er{dtmv}
\begin{gather}
\lb{Mbce}
\mm(n,\mv_{0},\mv_{1},\la)=\begin{pmatrix}
 {\la}+\kmx(n)+\frac{\mv^{2}_{0}(\mv^{1}_{1} \mv^{2}_{0} \mv^{1}_{0}+\mv^{1}_{1}-{\mc} \mv^{1}_{0}-\kmx(n) \mv^{1}_{0})}{\mv^{1}_{0} \mv^{2}_{0}+1} & \frac{\mv^{1}_{1} \mv^{2}_{0} \mv^{1}_{0}+\mv^{1}_{1}-{\mc} \mv^{1}_{0}-\kmx(n) \mv^{1}_{0}}{\mv^{1}_{0} \mv^{2}_{0}+1} \\
 \mv^{2}_{0} & 1 \\
\end{pmatrix},\\
\lb{Ubce}
\lt(n,\mv_{-1},\mv_{0},\mv_{1},\la)=\begin{pmatrix}
 -{\la} & -\frac{2(\mv^{1}_{1} \mv^{2}_{0} \mv^{1}_{0}+\mv^{1}_{1}-{\mc} \mv^{1}_{0}-\kmx(n) \mv^{1}_{0})}{\mv^{1}_{0} \mv^{2}_{0}+1} \\
 -2 \mv^{2}_{-1} & {\la} \\
\end{pmatrix}.
\end{gather}

In order to simplify the MLR~\er{Mbce},~\er{Ubce} by means of gauge transformations,
we consider Theorem~\ref{thmlpk1} 
in the case $\ndu=2$, $\,\sm=2$, $\,\ip=1$ with~$\mv_\ell$ instead of~$u_\ell$.
The matrix~\er{Mbce} satisfies condition~\er{pdm} with~$\ip=1$.
Therefore, we can apply Theorem~\ref{thmlpk1} to the MLR~\er{Mbce}, \er{Ubce}.
To this end, we need to do the following.
\begin{itemize}
	\item Choose a constant vector $a_0=(a^1_0,a^2_0)\in\fik^2$.
	\item Substitute $\mv_0=a_0$ in~$\mm(n,\mv_{0},\mv_{1},\la)$
given by~\er{Mbce} (i.e., substitute $\mv^i_0=a^i_0$ for $i=1,2$ in~\er{Mbce}).
	\item Compute $\gbf(n,\mv_0,\la)$ given by formula~\er{guum}, 
	which involves the matrix 
\begin{gather}
\lb{inMmbe}
\begin{gathered}
\mm(n,a_0,\mv_{1},\la)^{-1}=
\begin{pmatrix}
 {\la}+\kmx(n)+\frac{a^{2}_{0}(\mv^{1}_{1} a^{2}_{0} a^{1}_{0}+\mv^{1}_{1}-{\mc} a^{1}_{0}-\kmx(n) a^{1}_{0})}{a^{1}_{0} a^{2}_{0}+1} & \frac{\mv^{1}_{1} a^{2}_{0} a^{1}_{0}+\mv^{1}_{1}-{\mc} a^{1}_{0}-\kmx(n) a^{1}_{0}}{a^{1}_{0} a^{2}_{0}+1} \\
 a^{2}_{0} & 1 \\
\end{pmatrix}^{-1}=\\
=\begin{pmatrix}
 \frac{1}{{\la}+\kmx(n)} & \frac{a^{1}_{0} \kmx(n)+a^{1}_{0} {\mc}-a^{1}_{0} a^{2}_{0} \mv^{1}_{1}-\mv^{1}_{1}}{(a^{1}_{0} a^{2}_{0}+1) ({\la}+\kmx(n))} \\
 -\frac{a^{2}_{0}}{{\la}+\kmx(n)} & \frac{a^{1}_{0} \mv^{1}_{1} (a^{2}_{0})^2+a^{1}_{0} {\la} a^{2}_{0}-a^{1}_{0} {\mc} a^{2}_{0}+\mv^{1}_{1} a^{2}_{0}+{\la}+\kmx(n)}{(a^{1}_{0} a^{2}_{0}+1) ({\la}+\kmx(n))} \\
\end{pmatrix}.
\end{gathered}
\end{gather}
\item Apply the obtained gauge transformation $\gbf=\gbf(n,\mv_0,\la)$ 
to the MLR~\er{Mbce},~\er{Ubce} and compute 
the gauge equivalent MLR $\big(\widehat{\mm},\,\widehat{\lt}\big)$ given by~\er{hm2}.
\end{itemize}
We can choose for $a_0=(a^1_0,a^2_0)$ any constant vector 
such that \er{inMmbe} is well defined.
In order to make formula~\er{inMmbe} as simple as possible, we take 
\begin{gather}
\lb{aaa} 
a^1_0=0,\qquad a^2_0=0.
\end{gather}
Then, using~\er{inMmbe} with~\er{aaa}, we see that \er{guum} reads
\begin{gather}
\lb{guu0z}
\gbf(n,\mv_0,\la)=
\sho^{-1}\big(\mm(n,a_0,\mv_{1},\la)^{-1}\big)=
\sho^{-1}\left(\begin{pmatrix}
 \frac{1}{{\la}+\kmx(n)} & -\frac{\mv^{1}_{1}}{{\la}+\kmx(n)} \\
 0 & 1 \\
\end{pmatrix}\right)=
\begin{pmatrix}
 \frac{1}{{\la}+\kmx(n-1)} & -\frac{\mv^{1}_{0}}{{\la}+\kmx(n-1)} \\
 0 & 1 \\
\end{pmatrix}.
\end{gather}
Applying the obtained gauge transformation~\er{guu0z} to the MLR~\er{Mbce},~\er{Ubce} 
and computing the MLR $\big(\widehat{\mm},\,\widehat{\lt}\big)$ given by~\er{hm2}, we get
\begin{gather}
\lb{zchmm}
\widehat{\mm}(n,\mv_{0},\la)=\sho(\gbf)\cdot\mm\cdot\gbf^{-1}=
\begin{pmatrix}
 \frac{({\la}+\kmx(n-1)) (\mv^{1}_{0} \mv^{2}_{0} {\la}+{\la}+\kmx(n)-{\mc} \mv^{1}_{0} \mv^{2}_{0})}{({\la}+\kmx(n)) (\mv^{1}_{0} \mv^{2}_{0}+1)} & \frac{({\la}-{\mc}) \mv^{1}_{0}}{{\la}+\kmx(n)} \\
 ({\la}+\kmx(n-1)) \mv^{2}_{0} & \mv^{1}_{0} \mv^{2}_{0}+1 \\
\end{pmatrix},\\
\lb{zchlt}
\widehat{\lt}(n,\mv_{-1},\mv_{0},\la)=\tdt(\gbf)\cdot\gbf^{-1}+\gbf\cdot\lt\cdot\gbf^{-1}=
\left(\!\begin{smallmatrix}
 2 \mv^{1}_{0} \mv^{2}_{-1}-{\la} & \frac{2 ({\mc}-{\la}) \mv^{1}_{0}}{{\la}+\kmx(n-1)} \\
 -2 ({\la}+\kmx(n-1)) \mv^{2}_{-1} & {\la}-2 \mv^{1}_{0} \mv^{2}_{-1} \\
\end{smallmatrix}\!\right).
\end{gather}
In agreement with Theorem~\ref{thmlpk1}, we see that
the matrix~\er{zchmm} depends only on~$n,\,\mv_{0},\,\la$,
in contrast to the matrix~\er{Mbce} depending on~$n,\,\mv_{0},\,\mv_{1},\,\la$.

Now, using Definition~\ref{dtame} and Remark~\ref{rgnla} with~$\mv$ 
instead of~$u$, we see that the MLR~\er{zchmm},~\er{zchlt} is $\mv$-tame, 
and we can try to simplify it by applying a gauge transformation $G=G(\nt,\la)$ 
depending only on $\nt,\,\la$.
To avoid cumbersome formulas, we take a diagonal gauge transformation 
$G=\left(\begin{smallmatrix}
g^{11}(\nt,\la) & 0\\
0& g^{22}(\nt,\la)
\end{smallmatrix}\right)$, where $g^{11}(\nt,\la)$ and $g^{22}(\nt,\la)$ are nonzero scalar functions.
Then $\sho(G)=\left(\begin{smallmatrix}
g^{11}(\nt+1,\la) & 0\\
0& g^{22}(\nt+1,\la)
\end{smallmatrix}\right)$ and $\tdt(G)=0$.
Applying~$G$ to the MLR~\er{zchmm},~\er{zchlt}, we get the corresponding gauge equivalent MLR
\begin{gather}
\lb{hhm}
\widehat{\widehat{\mm}}=
\sho(G)\cdot\widehat{\mm}(n,\mv_{0},\la)\cdot G^{-1}=
\begin{pmatrix}
 \frac{g^{11}(\nt+1,\la)({\la}+\kmx(n-1)) (\mv^{1}_{0} \mv^{2}_{0} {\la}+{\la}+\kmx(n)-{\mc} \mv^{1}_{0} \mv^{2}_{0})}{({\la}+\kmx(n)) (\mv^{1}_{0} \mv^{2}_{0}+1)g^{11}(\nt,\la)} & 
\frac{g^{11}(\nt+1,\la)({\la}-{\mc}) \mv^{1}_{0}}{({\la}+\kmx(n))g^{22}(\nt,\la)} \\
\frac{g^{22}(\nt+1,\la)({\la}+\kmx(n-1))\mv^{2}_{0}}{g^{11}(\nt,\la)} & 
\frac{g^{22}(\nt+1,\la)\mv^{1}_{0} \mv^{2}_{0}+1}{g^{22}(\nt,\la)} \\
\end{pmatrix},\\
\lb{hhlt}
\widehat{\widehat{\lt}}=\tdt(G)\cdot G^{-1}+G\cdot\widehat{\lt}(n,\mv_{-1},\mv_{0},\la)\cdot G^{-1}=
\left(\!\begin{smallmatrix}
 2 \mv^{1}_{0} \mv^{2}_{-1}-{\la} & 
\frac{2g^{11}(\nt,\la)({\mc}-{\la})\mv^{1}_{0}}{({\la}+\kmx(n-1))g^{22}(\nt,\la)} \\
 \frac{-2g^{22}(\nt,\la)({\la}+\kmx(n-1))\mv^{2}_{-1}}{g^{11}(\nt,\la)} & 
{\la}-2 \mv^{1}_{0} \mv^{2}_{-1} \\
\end{smallmatrix}\!\right).
\end{gather}
We want to choose $G=\left(\begin{smallmatrix}
g^{11}(\nt,\la) & 0\\
0& g^{22}(\nt,\la)
\end{smallmatrix}\right)$ such that the matrices~\er{hhm},~\er{hhlt} become simpler than 
the matrices~\er{zchmm},~\er{zchlt}.
The formulas in the right-hand sides of~\er{hhm},~\er{hhlt} suggest 
to take $g^{11}(\nt,\la)={\la}+\kmx(n-1)$, $\,g^{22}(\nt,\la)=1$.
Then \er{hhm},~\er{hhlt} become
\begin{gather}
\lb{2hhm}
\widehat{\widehat{\mm}}=
\begin{pmatrix}
 \frac{\mv^{1}_{0} \mv^{2}_{0} {\la}+{\la}+\kmx(n)-{\mc} \mv^{1}_{0} \mv^{2}_{0}}{\mv^{1}_{0} \mv^{2}_{0}+1} & ({\la}-{\mc}) \mv^{1}_{0} \\
 \mv^{2}_{0} & \mv^{1}_{0} \mv^{2}_{0}+1 \\
\end{pmatrix},\qquad\quad
\widehat{\widehat{\lt}}=
\begin{pmatrix}
 2 \mv^{1}_{0} \mv^{2}_{-1}-{\la} & 2 ({\mc}-{\la}) \mv^{1}_{0} \\
 -2 \mv^{2}_{-1} & {\la}-2 \mv^{1}_{0} \mv^{2}_{-1} \\
\end{pmatrix}.
\end{gather}

Fix a constant $\ap\in\mathbb{C}$.
Below we use the matrices~\er{2hhm} with~$\la$ replaced by~$\ap$.
According to Definition~\ref{dmlpgt}, since the matrices~\er{2hhm} 
form a MLR for equation~\er{dtmv}, one can consider the auxiliary system
\begin{gather}
\lb{tsyspsit}
\sho(\tilde{\Psi})=\begin{pmatrix}
 \frac{\mv^{1}_{0} \mv^{2}_{0} {\ap}+{\ap}+\kmx(n)-{\mc} \mv^{1}_{0} \mv^{2}_{0}}{\mv^{1}_{0} \mv^{2}_{0}+1} 
& ({\ap}-{\mc}) \mv^{1}_{0} \\
 \mv^{2}_{0} & \mv^{1}_{0} \mv^{2}_{0}+1 \\
\end{pmatrix}\cdot\tilde{\Psi},\qquad\qquad
\pd_t(\tilde{\Psi})=\begin{pmatrix}
 2 \mv^{1}_{0} \mv^{2}_{-1}-{\ap} & 2 ({\mc}-{\ap}) \mv^{1}_{0} \\
 -2 \mv^{2}_{-1} & {\ap}-2 \mv^{1}_{0} \mv^{2}_{-1} \\
\end{pmatrix}\cdot\tilde{\Psi},
\end{gather}
which is compatible modulo equation~\eqref{dtmv}. 
Here $\tilde{\Psi}=\tilde{\Psi}(n,t)$ is an invertible $2\times 2$ matrix 
with elements $\tilde{\uppsi}^{ij}=\tilde{\uppsi}^{ij}(n,t)$, $\,i,j=1,2$.

Substituting $\tilde{\Psi}=\begin{pmatrix}
\tilde{\uppsi}^{11}&\tilde{\uppsi}^{12}\\
\tilde{\uppsi}^{21}&\tilde{\uppsi}^{22}
\end{pmatrix}$ in~\er{tsyspsit} and considering the first columns 
in the matrix equations~\er{tsyspsit}, we obtain
\begin{gather}
\lb{tsp11}
\sho(\tilde{\uppsi}^{11})=
\frac{(\mv^{1}_{0} \mv^{2}_{0} {\ap}+{\ap}+\kmx(n)
-{\mc} \mv^{1}_{0} \mv^{2}_{0})}{\mv^{1}_{0} \mv^{2}_{0}+1}\tilde{\uppsi}^{11}
+({\ap}-{\mc})\mv^{1}_{0}\tilde{\uppsi}^{21},\qquad\quad
\sho(\tilde{\uppsi}^{21})=\mv^{2}_{0}\tilde{\uppsi}^{11}+
(\mv^{1}_{0}\mv^{2}_{0}+1)\tilde{\uppsi}^{21},\\
\lb{tdp11}
\pd_t(\tilde{\uppsi}^{11})=(2\mv^{1}_{0}\mv^{2}_{-1}-{\ap})\tilde{\uppsi}^{11}
+2({\mc}-{\ap})\mv^{1}_{0}\tilde{\uppsi}^{21},\qquad\quad
\pd_t(\tilde{\uppsi}^{21})=-2 \mv^{2}_{-1}\tilde{\uppsi}^{11}
+({\ap}-2 \mv^{1}_{0} \mv^{2}_{-1})\tilde{\uppsi}^{21}.
\end{gather}
We set $\tvv={\tilde{\uppsi}^{11}}/{\tilde{\uppsi}^{21}}$.
Similarly to the computations~\er{swp11},~\er{dwp11} based on~\er{sp11},~\er{dp11},
now we compute $\sho(\tvv)=\sho\big({\tilde{\uppsi}^{11}}/{\tilde{\uppsi}^{21}}\big)$ and 
$\pd_t(\tvv)=\pd_t\big({\tilde{\uppsi}^{11}}/{\tilde{\uppsi}^{21}}\big)$, 
using~\er{tsp11},~\er{tdp11}, and get
\begin{gather}
\label{tswuu}
\sho(\tvv)=\frac{(\mv^{1}_{0} \mv^{2}_{0} {\ap}+{\ap}+\kmx(n)
-{\mc} \mv^{1}_{0} \mv^{2}_{0})\tvv+({\ap}-{\mc})\mv^{1}_{0}(\mv^{1}_{0} \mv^{2}_{0}+1)}%
{(\mv^{1}_{0} \mv^{2}_{0}+1)(\mv^{2}_{0}\tvv+\mv^{1}_{0}\mv^{2}_{0}+1)},\\
\label{tdwuu}
\pd_t(\tvv)=2\mv^{2}_{-1}(\tvv)^2+(4 \mv^{1}_{0} \mv^{2}_{-1}-2{\ap})\tvv+2 ({\mc}-{\ap}) \mv^{1}_{0}.
\end{gather}
The described way of deriving equations~\er{tswuu},~\er{tdwuu} 
from system~\er{tsyspsit}, which is compatible modulo equation~\eqref{dtmv},
implies that system~\er{tswuu},~\er{tdwuu} is compatible modulo~\eqref{dtmv} as well. 

Now we set
\begin{gather}
\label{tvwu}
\bv^1:=\tvv,\qquad\qquad\bv^2:=\mv^{2}.
\end{gather}
One has $\sho(\tvv)=\sho(\bv^1)=\bv^1_1$, $\,\mv^i_0=\mv^i$, $\,\bv^i_0=\mv^i$, $\,i=1,2$.
Substituting $\sho(\tvv)=\bv^1_1$, $\,\tvv=\bv^1_0$, $\,\mv^{1}_{0}=\mv^{1}$, $\,\mv^{2}_{0}=\bv^2_0$ 
in~\er{tswuu}, we get
\begin{gather}
\label{mvv11}
\bv^1_1=\frac{(\mv^{1}\bv^2_0{\ap}+{\ap}+\kmx(n)
-{\mc}\mv^{1}\bv^2_0)\bv^1_0+({\ap}-{\mc})\mv^{1}(\mv^{1}\bv^2_0+1)}%
{(\mv^{1}\bv^2_0+1)(\bv^2_0\bv^1_0+\mv^{1}\bv^2_0+1)}.
\end{gather}
Equation~\er{mvv11} can be resolved with respect to~$\mv^{1}$ as follows
\begin{gather}
\label{treseq}
\mv^{1}=\frac{\mX+\sqrt{(\mX)^2+2%
\big({\ap}\bv^{1}_{0}+\kmx(n)\bv^{1}_{0}-\bv^{1}_{1}\bv^{2}_{0}\bv^{1}_{0}-\bv^{1}_{1}\big)\mY}}%
{\mY},
\end{gather}
where $\mX$, $\mY$ are given by
\begin{gather}
\label{xy}
\mX:=\bv^{2}_{0}(\bv^{1}_{0}({\ap}-\bv^{1}_{1}\bv^{2}_{0}-{\mc})-2\bv^{1}_{1})-{\mc}+{\ap},\qquad\quad
\mY:=2\bv^{2}_{0}(\bv^{1}_{1}\bv^{2}_{0}+{\mc}-{\ap}).
\end{gather}
Now consider equation~\er{treseq} along with $\mv^{2}=\bv^2_0$, which gives
\begin{gather}
\label{tmtkmx}
\left\{
\begin{aligned}
\mv^1&=\frac{\mX+\sqrt{(\mX)^2+2%
\big({\ap}\bv^{1}_{0}+\kmx(n)\bv^{1}_{0}-\bv^{1}_{1}\bv^{2}_{0}\bv^{1}_{0}-\bv^{1}_{1}\big)\mY}}%
{\mY},\\ 
\mv^2&=\bv^{2}_{0}.
\end{aligned}\right.
\end{gather}
For each $\ell\in\zz$, applying the operator~$\sho^\ell$ to~\er{tmtkmx}, we get
\begin{gather}
\label{tuellmt}
\left\{
\begin{aligned}
\mv^1_\ell&=\frac{\sho^\ell(\mX)+\sqrt{(\sho^\ell(\mX))^2+2\big({\ap}\bv^{1}_{\ell}+%
\kmx(n+\ell)\bv^{1}_{\ell}-\bv^{1}_{\ell+1}\bv^{2}_{\ell}\bv^{1}_{\ell}-\bv^{1}_{\ell+1}\big)\sho^\ell(\mY)}}%
{\sho^\ell(\mY)},\\ 
\mv^2_\ell&=\bv^2_\ell,\qquad\qquad\qquad\qquad\ell\in\zz.
\end{aligned}\right.
\end{gather}
Substituting~$\tvv=\bv^1$ and~\er{tuellmt} in~\er{tdwuu}, we derive
\begin{gather}
\label{tdtmv1}
\pd_t(\bv^1)=2\bv^{2}_{-1}(\bv^1_0)^2-2{\ap}\bv^1_0+
\frac{\big(4\bv^{2}_{-1}\bv^1_0+2({\mc}-{\ap})\big)\left(\mX\!+\!\sqrt{(\mX)^2\!+\!2%
\big({\ap}\bv^{1}_{0}\!+\!\kmx(n)\bv^{1}_{0}\!-\!\bv^{1}_{1}\bv^{2}_{0}\bv^{1}_{0}\!-\!\bv^{1}_{1}\big)\mY}\right)}%
{\mY}.
\end{gather}
Now we consider~\er{tdtmv1} along with the equation obtained 
by means of substituting~\er{tuellmt} in the second component of~\er{dtmv}.
This gives the $2$-component non-autonomous equation
\begin{gather}
\label{tdtmv}
\left\{
\begin{aligned}
\pd_t(\bv^1)&=2\bv^{2}_{-1}(\bv^1_0)^2-2{\ap}\bv^1_0+
\frac{\big(4\bv^{2}_{-1}\bv^1_0+2({\mc}-{\ap})\big)\left(\mX\!+\!\sqrt{(\mX)^2\!+\!2%
\big({\ap}\bv^{1}_{0}\!+\!\kmx(n)\bv^{1}_{0}\!-\!\bv^{1}_{1}\bv^{2}_{0}\bv^{1}_{0}\!-\!\bv^{1}_{1}\big)\mY}\right)}%
{\mY},\\ 
\pd_t(\bv^2)&=
\mH\big(\nt,\bv^{1}_{0},\bv^{1}_{1},\bv^{1}_{2},\bv^{2}_{-1},\bv^{2}_{0},\bv^{2}_{1},\mc,\ap\big),
\end{aligned}\right.
\end{gather}
where the function 
$\mH=\mH\big(\nt,\bv^{1}_{0},\bv^{1}_{1},\bv^{1}_{2},\bv^{2}_{-1},\bv^{2}_{0},\bv^{2}_{1},\mc,\ap\big)$ 
is obtained by substituting~\er{tuellmt} in the right-hand side of the second component of~\er{dtmv}.
(We do not present the explicit formula for~$\mH$, since it is rather cumbersome.)

As discussed above, 
system~\er{tswuu},~\er{tdwuu} is compatible modulo equation~\eqref{dtmv}.
Since equations \er{tmtkmx}, \er{tdtmv} 
are derived from~\er{tswuu},~\er{tdwuu},~\er{dtmv} (by means of introducing~\er{tvwu}), 
system~\er{tmtkmx},~\er{tdtmv} is compatible modulo equation~\eqref{dtmv} as well.
This implies that \er{tmtkmx} is a MT from~\er{tdtmv} to~\er{dtmv}.
(This fact can also be verified by a direct calculation.)

\begin{remark}
\lb{rlax1}
Recall that \er{2hhm} is a MLR for equation~\er{dtmv},
and formulas~\er{tuellmt} are derived by applying the operator~$\sho^\ell$ to~\er{tmtkmx}.
Using Lemma~\ref{luv} for the MLR~\er{2hhm} and 
the MT~\er{tmtkmx} from~\er{tdtmv} to~\er{dtmv},
we can substitute~\er{tuellmt} in~\er{2hhm}, and this gives a MLR for equation~\er{tdtmv}.
\end{remark}

\subsection{Other examples}
\lb{sotex}

The following $2$-component equation 
was derived in~\cite[Section~5.1.1]{MMBW2013} 
in a study of invariant time evolutions of polygons in the centro-affine plane
\begin{gather}
\label{MMBW}
\left\{
\begin{aligned}
\pd_t(\pe^1)&=\pe^{1}_{0} \pe^{2}_{0},\\ 
\pd_t(\pe^2)&=\frac{\pe^{1}_{-1}}{\pe^{1}_{0}}-\frac{\pe^{1}_{0}}{\pe^{1}_{1}}.
\end{aligned}\right.
\end{gather}
It is shown in~\cite[Section~5.1.1]{MMBW2013} that one has the MT 
\begin{gather}
\label{mtmbw}
\left\{
\begin{aligned}
\td^1&=\pe^{1}_{0}/\pe^{1}_{1},\\ 
\td^2&=\pe^{2}_{0}
\end{aligned}\right.
\end{gather}
from~\er{MMBW} to the equation
\begin{gather}
\label{todafm}
\left\{
\begin{aligned}
\pd_t(\td^1)&=\td^{1}_{0} (\td^{2}_{0}-\td^{2}_{1}),\\ 
\pd_t(\td^2)&=\td^{1}_{-1}-\td^{1}_{0},
\end{aligned}\right.
\end{gather}
which is the Toda lattice in the Flaschka--Manakov coordinates~\cite{FlaschPhR74,Manak75}.

It is known (see, e.g.,~\cite{kmw}) that the following matrices form a MLR for~\eqref{todafm}
\begin{gather}
\lb{mlrtd}
\breve{\mm}=\begin{pmatrix}
 {\la}-\td^{2}_{1} & \td^{1}_{0} \\
 -1 & 0 \\
\end{pmatrix},
\qquad\quad
\breve{\lt}=\begin{pmatrix}
 0 & -\td^{1}_{0} \\
 1 & {\la}-\td^{2}_{0} \\
\end{pmatrix}.
\end{gather}
Applying Lemma~\ref{luv} to the MLR~\er{mlrtd} and 
to the MT~\er{mtmbw} from~\er{MMBW} to~\er{todafm},
we substitute~\er{mtmbw} in~\er{mlrtd} and obtain the following MLR for equation~\er{MMBW}
\begin{gather}
\lb{tMbce}
\mm=\begin{pmatrix}
 {\la}-\pe^{2}_{1} & \pe^{1}_{0}/\pe^{1}_{1}\\
 -1 & 0 \\
\end{pmatrix},\\
\lb{tUbce}
\lt=\begin{pmatrix}
 0 & -\pe^{1}_{0}/\pe^{1}_{1}\\
 1 & {\la}-\pe^{2}_{0} \\
\end{pmatrix}.
\end{gather}
In order to simplify the MLR~\er{tMbce},~\er{tUbce} by means of gauge transformations,
we consider Theorem~\ref{thmlpk1} in the case $\ndu=2$, $\,\sm=2$, $\,\ip=1$.
The matrix~\er{tMbce} obeys condition~\er{pdm} with~$\ip=1$.
Therefore, one can apply Theorem~\ref{thmlpk1} to the MLR~\er{tMbce}, \er{tUbce}.
To this end, we need to do the following.
\begin{itemize}
	\item Choose a constant vector $a_0=(a^1_0,a^2_0)\in\fik^2$.
	\item Substitute $\pe_0=a_0$ in~$\mm$
given by~\er{tMbce} (i.e., substitute $\pe^i_0=a^i_0$ for $i=1,2$ in~\er{tMbce}) 
and compute the corresponding gauge transformation~\er{guum}, 
which gives $\gbf=\begin{pmatrix}
 0 & -1 \\
 \pe^{1}_{0}/a^1_0 & \pe^{1}_{0} ({\la}-\pe^{2}_{0})/a^1_0
\end{pmatrix}$.
Here it is convenient to take $a^1_0=1$, and then $\gbf$ takes the form 
$\gbf=\begin{pmatrix}
 0 & -1 \\
 \pe^{1}_{0} & \pe^{1}_{0} ({\la}-\pe^{2}_{0})
\end{pmatrix}$.
\item Now we apply the obtained gauge transformation $\gbf$ 
to the MLR~\er{tMbce},~\er{tUbce} and compute 
the gauge equivalent MLR $\big(\widehat{\mm},\,\widehat{\lt}\big)$ given by~\er{hm2}, which gives
\begin{gather}
\lb{hmmtd}
\widehat{\mm}=\sho(\gbf)\cdot\mm\cdot\gbf^{-1}=
\begin{pmatrix}
 {\la}-\pe^{2}_{0} & 1/\pe^{1}_{0}\\
 -\pe^{1}_{0} & 0 \\
\end{pmatrix},\qquad
\widehat{\lt}=\tdt(\gbf)\cdot\gbf^{-1}+\gbf\cdot\lt\cdot\gbf^{-1}=
\begin{pmatrix}
 0 & -1/\pe^{1}_{0}\\
 \pe^{1}_{-1} & {\la} \\
\end{pmatrix}.
\end{gather}
\end{itemize}

Fix a constant $\mc\in\mathbb{C}$.
Below we use the matrices~\er{hmmtd} with~$\la$ replaced by~$\mc$.
According to Definition~\ref{dmlpgt}, since \er{hmmtd} 
is a MLR for equation~\er{MMBW}, we can consider the auxiliary system
\begin{gather}
\lb{astd}
\sho(\Psi)=\begin{pmatrix}
 {\mc}-\pe^{2}_{0} & 1/\pe^{1}_{0}\\
 -\pe^{1}_{0} & 0 \\
\end{pmatrix}\cdot\Psi,\qquad\quad
\pd_t(\Psi)=\begin{pmatrix}
 0 & -1/\pe^{1}_{0}\\
 \pe^{1}_{-1} & {\mc} \\
\end{pmatrix}\cdot\Psi,
\end{gather}
which is compatible modulo equation~\eqref{MMBW}. 
Here $\Psi=\Psi(n,t)$ is an invertible $2\times 2$ matrix
with elements $\uppsi^{ij}=\uppsi^{ij}(n,t)$, $\,i,j=1,2$.

Substituting $\Psi=\begin{pmatrix}
\uppsi^{11}&\uppsi^{12}\\
\uppsi^{21}&\uppsi^{22}
\end{pmatrix}$ in~\er{astd} and considering the first columns 
in the matrix equations~\er{astd}, one gets
\begin{gather}
\lb{usp11}
\sho(\uppsi^{11})=({\mc}-\pe^{2}_{0})\uppsi^{11}
+\frac{1}{\pe^{1}_{0}}\uppsi^{21},\qquad\quad
\sho(\uppsi^{21})=-\pe^{1}_{0}\uppsi^{11},\\
\lb{udp11}
\pd_t(\uppsi^{11})=-\frac{1}{\pe^{1}_{0}}\uppsi^{21},\qquad\quad
\pd_t(\uppsi^{21})=\pe^{1}_{-1}\uppsi^{11}+{\mc}\uppsi^{21}.
\end{gather}
Since the matrix system~\er{astd} is compatible modulo equation~\eqref{MMBW}, 
the corresponding system~\er{usp11},~\er{udp11} 
obtained from the first columns in~\er{astd} is compatible modulo~\eqref{MMBW} as well.
We set
\begin{gather}
\label{mvup}
\mv^1:=\uppsi^{11},\qquad\qquad\mv^2:=\uppsi^{21}.
\end{gather}
For $i=1,2$ one has $\sho(\uppsi^{i1})=\sho(\mv^i)=\mv^i_1$, 
$\,\pe^i_0=\pe^i$, $\,\mv^i_0=\mv^i$.
Substituting $\sho(\uppsi^{i1})=\mv^i_1$, $\,\uppsi^{i1}=\mv^i_0$, $\,\pe^i_0=\pe^i$ 
in~\er{usp11}, we get
\begin{gather}
\lb{mvi1}
\mv^1_1=({\mc}-\pe^{2})\mv^1_0+\frac{1}{\pe^{1}}\mv^2_0,
\qquad\quad\mv^2_1=-\pe^{1}\mv^1_0.
\end{gather}
Equations~\er{mvi1} can be resolved with respect to~$\pe^1$, $\pe^2$ as follows
\begin{gather}
\label{mtmtd}
\left\{
\begin{aligned}
\pe^1&=-\mv^{2}_{1}/\mv^{1}_{0},\\ 
\pe^2&=({\mc}\mv^{1}_{0}\mv^{2}_{1}-\mv^{1}_{0}\mv^{2}_{0}-\mv^{1}_{1}%
\mv^{2}_{1})/(\mv^{1}_{0} \mv^{2}_{1}).
\end{aligned}\right.
\end{gather}
For each $\ell\in\zz$, applying the operator~$\sho^\ell$ to~\er{mtmtd}, we get
\begin{gather}
\label{pe12ell}
\left\{
\begin{aligned}
\pe^1_\ell&=-\mv^{2}_{\ell+1}/\mv^{1}_{\ell},\\ 
\pe^2_\ell&=({\mc}\mv^{1}_{\ell}\mv^{2}_{\ell+1}-\mv^{1}_{\ell}\mv^{2}_{\ell}-\mv^{1}_{\ell+1}%
\mv^{2}_{\ell+1})/(\mv^{1}_{\ell} \mv^{2}_{\ell+1}),\qquad\qquad\qquad\qquad\ell\in\zz.
\end{aligned}\right.
\end{gather}
Substituting $\uppsi^{11}=\mv^1$, $\,\uppsi^{21}=\mv^2$ 
and~\er{pe12ell} in~\er{udp11}, we obtain $2$-component equation
\begin{gather}
\label{ceqnew}
\left\{
\begin{aligned}
\pd_t(\mv^1)&=\mv^{1}_{0} \mv^{2}_{0}/\mv^{2}_{1},\\ 
\pd_t(\mv^2)&=\mv^{2}_{0} ({\mc} \mv^{1}_{-1}-\mv^{1}_{0})/\mv^{1}_{-1}.
\end{aligned}\right.
\end{gather}
As discussed above, 
system~\er{usp11},~\er{udp11} is compatible modulo equation~\eqref{MMBW}.
Since equations~\er{mtmtd},~\er{ceqnew} 
are derived from~\er{usp11},~\er{udp11} (by means of introducing~\er{mvup}), 
system~\er{mtmtd},~\er{ceqnew} is compatible modulo equation~\eqref{MMBW} as well.
This implies that \er{mtmtd} is a MT from~\er{ceqnew} to~\er{MMBW},
which can also be verified by a straightforward computation.

\begin{remark}
\lb{rlax2}
Recall that \er{hmmtd} is a MLR for equation~\er{MMBW},
and formulas~\er{pe12ell} are obtained by applying the operator~$\sho^\ell$ to~\er{mtmtd}.
Using Lemma~\ref{luv} for the MLR~\er{hmmtd} and 
the MT~\er{mtmtd} from~\er{ceqnew} to~\er{MMBW},
we can substitute~\er{pe12ell} in~\er{hmmtd}, and this gives a MLR for equation~\er{ceqnew}.
\end{remark}

\section*{Acknowledgments}

The author would like to thank S.~Konstantinou-Rizos and A.V.~Mikhailov for useful discussions.

This work was supported by the Russian Science Foundation (grant No. 25-21-00454, 
\url{https://rscf.ru/en/project/25-21-00454/} ).

\end{document}